\tikzset{%%
  every picture/.style={
    ->, >=stealth', shorten >=0.5pt,  auto, inner sep=2pt,
    semithick, double distance=1.25pt},
  every state/.style={
    circle, minimum size=18pt, inner sep=1pt, draw},
  every loop/.style={
    looseness=8, min distance=12.5mm},
} %%
\DeclareFontFamily{U}{mathb}{\hyphenchar\font45}
\DeclareFontShape{U}{mathb}{m}{n}{
      <5> <6> <7> <8> <9> <10> gen * mathb
      <10.95> mathb10 <12> <14.4> <17.28> <20.74> <24.88> mathb12
      }{}
\DeclareSymbolFont{mathb}{U}{mathb}{m}{n}
\DeclareMathSymbol{\preccurlyeq}{3}{mathb}{"A4}
\DeclareMathSymbol{\shortminus}{\mathbin}{AMSa}{"39}
\newcommand{\blankline}{\vspace{1.0\baselineskip}}
\newcommand{\Act}{\calA}
\newcommand{\Bit}{\mathbb{B}}
\newcommand{\ES}{\mathit{es}}
\newcommand{\IRC}{\mathit{rc}}
\newcommand{\PIRC}{\mathit{prc}}
\newcommand{\Nat}{{\mathbb{N}}}
\newcommand{\apair}[2]{\langle {#1}, {#2} \rangle}
\newcommand{\bpair}[2]{\lbrack \mkern1mu {#1}, {#2} \mkern1mu \rbrack}
\newcommand{\bigo}[1]{{O\!\left(#1\right)}}
\newcommand{\bigO}{\mathit{O}}   % O notation without an argument.
\newcommand{\bit}[1]{\textup{\texttt{#1}}}
\newcommand{\calA}{\mathscr{A}}
\newcommand{\calB}{\mathscr{B}}
\newcommand{\calC}{\mathscr{C}}
\newcommand{\calD}{\mathscr{D}}
\newcommand{\calG}{\mathscr{G}}
\newcommand{\calN}{\mathscr{N}\!}
\newcommand{\calS}{\mathscr{S}}
\newcommand{\calNplus}{\mathscr{N}^+}
\newcommand{\ofA}{\mathbb{A}}
\newcommand{\ofN}{\mathbb{N}}
\newcommand{\lbl}{\mathit{lbl}}
\newcommand{\lc}{\mathopen{\lbrace \,}}
\newcommand{\rc}{\mathclose{\, \rbrace}}
\newcommand{\piES}{\pi_{\mkern-1mu \ES} \mkern1mu}
\newcommand{\pijl}[1]{\mathrel{\text{$\xrightarrow{\smash[t]{#1}}$}}}
\newcommand{\bpijl}[1]{\mathrel{\text{$\xrightarrow{\smash[t]{#1}}_{\calB}$}}}
\newcommand{\cpijl}[1]{\mathrel{\text{$\xrightarrow{\smash[t]{#1}}_{\calC}$}}}
\newcommand{\pijlster}[1]{\mathrel{\text{$\xrightarrow{\smash[t]{#1}}^{\mkern-1mu \ast}$}}}
\newcommand{\pijlsterc}[1]{\mathrel{\text{$\xrightarrow{\smash[t]{#1}}^{\mkern-1mu \ast}_{\calC}$}}}
\newlength{\leftrightarrowwidth}
\newcommand{\bisauxiliary}{%
  \raisebox{.3ex}{%
    \makebox[\leftrightarrowwidth]{%
      $\underline{\makebox[0.7\leftrightarrowwidth]{$\leftrightarrow$}}$
    }}}  
\newcommand{\superbis}{\mbox{\raisebox{2pt}{\scalebox{0.7}{$\bisauxiliary$}}}}
\newcommand{\bis}{\mathrel{\bisauxiliary}}
\title{Lowerbounds for Bisimulation by Partition Refinement} 
\author[J.F.~Groote]{Jan Friso Groote\lmcsorcid{0000-0003-2196-6587}}
\address{Eindhoven University of Technology, The Netherlands}
\email{j.f.groote@tue.nl, j.j.m.martens@tue.nl, e.p.d.vink@tue.nl}
\author[J.J.M.~Martens]{Jan Martens\lmcsorcid{0000-0003-4797-7735}}
\thanks{Partially funded by the AVVA project NWO 612.001.751/TOP1.17.002}
\author[E.P.~de~Vink]{Erik P. de Vink\lmcsorcid{0000-0001-9514-2260}}
\keywords{Bisimilarity, partition refinement, labelled transition
	system, lowerbound}
\begin{document}
\maketitle

\begin{abstract}
  We provide time lowerbounds for sequential and parallel algorithms
  deciding bisimulation on labelled transition systems that use
  partition refinement. For sequential algorithms this is
  $\Omega((m \mkern1mu {+} \mkern1mu n ) \mkern-1mu \log \mkern-1mu
  n)$ and for parallel algorithms this is $\Omega(n)$, where $n$ is
  the number of states and $m$ is the number of transitions. The
  lowerbounds are obtained by analysing families of deterministic
  transition systems, ultimately with two actions in the sequential
  case, and one action for parallel algorithms.

  For deterministic transition systems with one action, bisimilarity
  can be decided sequentially with fundamentally different techniques
  than partition refinement. In particular, Paige, Tarjan, and Bonic
  give a linear algorithm for this specific situation. We show,
  exploiting the concept of an oracle, that this approach is not of
  help to develop a faster generic algorithm for deciding
  bisimilarity. For parallel algorithms there is a similar situation
  where our approach can be applied, too.
\end{abstract}

%!TEX root =  main.tex
%% file intro.tex

\section{Introduction}
\label{sec-intro}

Strong bisimulation~\cite{park1981gi,milner1980calculus} is the gold
standard for equivalence on labelled transition systems
(LTSs). Deciding bisimulation equivalence among the states of an LTS
is a crucial step for tool-supported analysis and model checking of
LTSs. The well-known and widely-used partition refinement algorithm of
Paige and Tarjan~\cite{paige1987three} has a worst-case upperbound
$\bigO(m \mkern-1mu \log \mkern-1mu n)$ for establishing the
bisimulation equivalence classes. Here, $n$~is the number of states and $m$~is the number of
transitions in an LTS\@. 

The algorithm of Paige and Tarjan seeks to find, starting from an
initial partition, via refinement steps, the coarsest stable
partition, that in fact is built from the bisimulation equivalence
classes that are looked for. The algorithm achieves the complexity of
the logarithm of the number of states times the number of transitions 
by restricting the amount of work for refining blocks and moving states.
When refining, the splitting blocks are investigated using an
intricate bookkeeping trick. Only the smaller parts of a block that
are to be moved to a new block are split off, leaving the bulk of the
original block at its place. These specific ideas go back
to~\cite{hopcroft1971DFAmin} and make the difference with the earlier
$\bigO(m \mkern0.5mu n)$ algorithm of Kanellakis and
Smolka~\cite{kanellakis1983}. The algorithms by Kanellakis-Smolka and
Paige-Tarjan, with the format of successive refinements of an initial
partition till a fixpoint is reached, have been leading for variations
and generalisations for deciding specific forms of (strong)
bisimilarities, see e.g.\ \cite{%
  buchholz199lumping,%
  dovier04efficient,%
  groote2018prob,%
  wissman2020coalgebra, jansen_2020}.
  
We are interested in the question whether the Paige-Tarjan algorithm
is computationally optimal. A lowerbound for a related problem is
provided in~\cite{berkholz2017tight} that studies colour refinement of
graphs. Colour refinement computes, given a graph and an initial colouring,
a minimal consistent colouring such that every two equally
coloured nodes have, for every colour, the same number of neighbours of
the same colour. More specifically, in that paper it is proven that for
a family of graphs with $n$~nodes and $m$~edges, finding the canonical coarsest
stable colouring is in~$\Omega((m+n) \log n)$. However, the costs for
computations on graphs for colour refinement are charged differently
than those for partition refinement for bisimulation on LTSs. The
former takes edges between blocks of uniformly coloured nodes into
account, the latter focuses on the size of newly created blocks of
states. In~\cite{berkholz2017tight} it is described how the family
of graphs underlying the lowerbound for colour refinement can be
transformed into a family of Kripke structures for which computing
bisimulation is~$\Omega((m+n) \log n)$ when counting the numbers of edges.

In this paper we follow a different approach to obtain a lowerbound.
We define the concept of a partition refinement
algorithm and articulate the complexity in terms of the number of
states that are moved. In particular, we define the notion of a valid
refinement sequence which has its counterpart in iteration sequences
for colour refinement. Then, we introduce a family of (deterministic)
LTSs, called bisplitters, for which we show that computing
bisimulation requires $n \log n$~work.
The family of $n \log n$-hard LTSs that we use to establish the
lowerbound, involves an action set of $\log n$~actions. Building on
this result and exploiting ideas borrowed from~\cite{paige1985linear}
to extend the bisimulation classes for the states in the end
structures, i.e.\ cycles, to the states of the complete LTS, we
provide another family of (deterministic) LTSs that have two actions
only. Then we argue that for the two-action
case the complexity of deciding bisimulation is
$\Omega((m+n) \log n)$. We want to stress that the families involved consist
of deterministic LTSs.

Recently, a linear time algorithm for bisimilarity was proposed
for a PRAM (Parallel Random Access Machine) using $\max(n,m)$
processors~\cite{martens2021linear}.  This algorithm also employs
partition refinement. This naturally raises the question whether the
algorithm is optimal, or whether it can fundamentally be improved. We
answer the question in the present paper by
showing an~$\Omega(n)$ lowerbound for parallel algorithms employing
partition refinement, using a family of deterministic transition
systems with one action label.

We obtain our lowerbound results assuming that algorithms use
partition refinement. However, one may wonder if a different approach
than partition refinement can lead to a faster decision procedure for
bisimulation. For the specific case of deterministic LTSs with a
singleton action set and state labelling,
Robert Paige, Robert Tarjan and Robert Bonic propose a sequential
algorithm~\cite{paige1985linear} that uses linear time. We refer to it
as Roberts' algorithm. In~\cite{castiglione2008hopcroft} it is proven
that partition refinement \`a~la Hopcroft has a lowerbound
of~$\Omega(n \log n)$ in this case.  Concretely, this means that
in the one-letter case Roberts' algorithm achieves the
essentially better performance by using a completely different
technique than partition refinement to determine the bisimulation
equivalence classes.

Crucial for Roberts' algorithm is the ability to identify, in linear
time, the bisimilarity classes of cycles. In this paper we show that
if the alphabet consists of at least two actions, a rapid decision on
`cycles' as in~\cite{paige1985linear} will not be of help to improve
on the Paige-Tarjan algorithm for general LTSs.  We argue that the
speciality in the algorithm of~\cite{paige1985linear}, viz.\ to be able
to quickly decide the bisimilarity of the states on a
  cycle, can be captured by means of a stronger notion, namely an
oracle that provides the bisimulation classes of the states of
a so-called `end structure', the counterpart in the multiple action
setting of a cycle in the single action setting. The oracle can be
consulted to refine the initial partition with respect to the
bisimilarity on the end structures of the LTS for free. We show that
for the class of sequential partition refinement algorithms enhanced
with an oracle as described, thus encompassing the algorithm
of~\cite{paige1985linear}, the $(m{+}n) \log n$~lowerbound persists
for action sets with at least two actions.

For parallel algorithms a similar situation occurs as for
deterministic Kripke structures: an $\bigO(\log n)$ parallel
algorithm exists \cite{jaja1994efficient} to determine the
bisimulation equivalence classes. This algorithm also necessarily
employs techniques that go beyond partition refinement. We believe
that these techniques cannot be used either to fundamentally improve
the complexity of determining bisimilarity on LTSs, but leave the
proof as an open question.

\blankline

\noindent
The document is structured as follows.
In Section~\ref{sec:prelims} we give the necessary preliminaries on
the problem.
A recap of the linear algorithm of~\cite{paige1985linear} is provided
in Section~\ref{sec:roberts}.
Next, we introduce the family of LTSs~$\calB_k$ for
which we show in Section~\ref{sec:hard_example} that deciding
bisimilarity is~$\Omega(n \log n)$ for the class of partition
refinement algorithms and for which we establish in
Section~\ref{sec:oracle} an $\Omega(n \log n)$ lowerbound for the
class of partition refinement algorithms enhanced with an oracle for
end structures.
In Section~\ref{sec:two-action-action-set} we introduce the family of
deterministic LTSs~$\calC_k$, each involving two actions only,  to take the
number of transitions~$m$ into account and establish an $\Omega((m+n)
\log n)$ lowerbound for partition refinement with and without an oracle
for end structures.
In Section~\ref{sec:parallel} we provide the $\Omega(n)$ lowerbound for
parallel refinement algorithms. 
In Section~\ref{sec:color-refinement} we discuss the differences and
similarities with the lowerbound results on colour refinement
of~\cite{berkholz2017tight}.
We wrap up with concluding remarks.

\emph{Note} The present paper is an extension the conference
paper~\cite{GMV21:concur} that appeared in the proceedings of
CONCUR~2021.

% !TEX root =  main.tex
%% file preliminaries.tex

\section{Preliminaries}
\label{sec:prelims}

Given a set of states~$S$, a \emph{partition} of~$S$ is a set of
subsets of states $\pi \subseteq 2^S$ such that
$\emptyset \not\in \pi$, for all $B, B' \in \pi$ either 
$B \cap B' = \emptyset$ or~$B = B'$,
and $\bigcup_{B {\in} \pi} \: B = S$. The elements of a partition are
referred to as blocks. A partition~$\pi$ of~$S$ induces an equivalence
relation ${=_\pi} \subseteq {S \times S}$, where for two states
$s, t \in S$, $s =_\pi t$ iff the states $s$ and~$t$ are in the same
block, i.e.\ there is a block $B \in \pi$ such that $s,t \in B$. A
partition~$\pi'$ of~$S$ is a \emph{refinement} of a partition~$\pi$
of~$S$ iff for every block $B' \in \pi'$ there is a block $B \in \pi$
such that $B' \subseteq B$. It follows that each block of~$\pi$ is the
disjoint union of blocks of~$\pi'$. The refinement is
\emph{strict} if $\pi \neq \pi'$.  The common refinement of two
partitions $\pi$ and~$\pi'$ is the partition with blocks
$\lc B \cap B' \mid B \in \pi ,\, B' \in \pi' \colon B \cap B'\neq
\emptyset \rc$.  A sequence of partitions $( \pi_0, \ldots, \pi_n )$
is called a refinement sequence iff $\pi_{i{+}1}$~is a refinement
of~$\pi_i$, for all $0 \leqslant i <n$.

\begin{defi}
  A labelled transition system with initial partition (LTS) is a
  four-tuple
  $L = (S ,\mkern1mu \Act ,\mkern1mu {\rightarrow} ,\mkern1mu \pi_0)$
  where $S$~is a finite set of states~$S$, $\Act$~is a finite alphabet
  of actions, ${\rightarrow} \subseteq {S \times \Act \times S}$ is
  a transition relation, and $\pi_0$~is a partition of the set of
  states~$S$.  A labelled transition system with initial partition is
  called deterministic if the transition relation is a total
  function $S \times \Act \to S$.
\end{defi}

\noindent
Given an LTS $L = (S, \Act, {\rightarrow}, \pi_0)$, states
$s, t \in S$, and an action~$a \in \Act$, we write $s \pijl{a} t$
instead of $(s,a,t) \in {\rightarrow}$.
For notational convenience, we occasionally write
$L[U] = \lc t \in S \mid \exists \mkern1mu s \in U \mkern1mu \exists
\mkern1mu a \in \calA \colon s \pijl{a} t \rc$, and, for a
deterministic LTS~$L$, we may use $L(s,a)$ to denote the unique
state~$t$ of~$L$ such that $s \pijl{a} t$.
We say that $s$~\emph{reaches}~$t$ via~$a$
iff $s \pijl{a} t$.  A state~$s$ reaches a set $U \subseteq S$ via
action~$a$ iff there is a state in~$U$ that is reached by~$s$ via~$a$,
notation $s \pijl{a} U$.
A set of states $V \subseteq S$ is called \textit{stable} under a set
of states $U \subseteq S$ iff for all actions~$a$, either all states
in~$V$ reach~$U$ via~$a$, or no state in~$V$ reaches~$U$ via~$a$.
Thus, a set of states~$V$ is \emph{not} stable under~$U$ iff for two
states $s$ and~$t$ in~$V$ and an action~$a$ it holds that
$s \pijl{a} U$ and $t \stackrel{a}{\nrightarrow} U$. 
A partition~$\pi$ is stable under a set of states~$U$ iff each block
$B \in \pi$ is stable under~$U$. A partition~$\pi$ is called stable
iff it is stable under all its blocks.
So, for any two blocks $B$ and~$C$ of~$\pi$ and any action
$a \in \Act$, either each state~$s$ of~$B$ has an $a$-transition
to~$C$ or each state~$s$ of~$B$ doesn't have an $a$-transition to~$C$.

Following~\cite{park1981gi,milner1980calculus}, given an LTS~$L$, a
symmetric relation~$R \subseteq {S \times S}$ is called a bisimulation
relation iff for all~$(s,t)\in R$ and~$a \in \calA$, we have that
$s \pijl{a} s'$ for some~$s' \in S$ implies that $t \pijl{a} t'$ for
some~$t' \in S$ such that~$(s', t')\in R$.
In the setting of the present paper, as we incorporate the initial
partition in the definition of an LTS, bisimilarity is slightly
non-standard. For a bisimulation relation~$R$, we additionally require
that it respects the initial partition~$\pi_0$ of~$L$, i.e.\
$(s,t)\in R$ implies~$s =_{\pi_0} t$.
Two states~$s,t \in S$ are called (strongly) bisimilar for~$L$ iff a
bisimulation relation~$R$ exists with~$(s,t)\in R$, notation $s \bis_{\!\!L}
t$.
Bisimilarity is an equivalence relation on the set of states
of~$L$. We write~$[s]^{\superbis}_L$ for the bisimulation equivalence
class of the state~$s$ in~$L$.

Note that for a deterministic LTS with a set of states~$S$ and initial
partition~$\pi_0 = \{ S \}$, we have that $\pi_0$~itself already
represents bisimilarity, contrary to LTSs in general.

\blankline

\noindent
Partition refinement algorithms for deciding bisimilarity on LTSs
start with an initial partition~$\pi_0$, which is subsequently
repeatedly refined until a stable partition is reached. Thus,
unstable blocks are replaced by several smaller blocks. The
stable partition that is reached happens to be the coarsest stable
partition of the LTS refining~$\pi_0$ and coincides with
bisimilarity~\cite{kanellakis1983,paige1987three}.

Below we define so-called \emph{valid} refinement sequences.  An
algorithm is called a partition refinement algorithm iff every run of
the algorithm is reflected by a valid refinement sequence
$(\pi_0 , \ldots, \pi_n)$.  All the lowerbounds that we provide apply
to algorithms producing valid partition sequences, which virtually all
known bisimulation algorithms do, and as such this is the core
definition in this paper.

A partition sequence $(\pi_0 , \ldots, \pi_n)$ is valid when the
direct successor~$\pi_i$ of a partition~$\pi_{i-1}$ in the sequence is
obtained by splitting one or more unstable blocks in~$\pi_{i-1}$ using
only information available in $\pi_{i-1}$.  Furthermore, the last
partition in the sequence, the partition~$\pi_n$, is stable.  If
block~$B$ of~$\pi_{i-1}$ is replaced in~$\pi_i$ because it is not
stable under block~$B'$ of~$\pi_{i-1}$, then $B'$ is referred to as a
splitter block.

\begin{defi}
  \label{def:bis_valid}
  Let $L =(S, \Act, {\rightarrow}, \pi_0)$ be an LTS, and $\pi$ a
  partition of~$S$. A refinement~$\pi'$ of~$\pi$ is called a
  \emph{valid refinement} with respect to~$L$ iff the following
  criteria hold. 
  \begin{itemize}[align=parleft, labelsep=0.5cm]
  \item [(a)] $\pi'$ is a strict refinement of $\pi$.
  \item [(b)] \label{case2} If $s \neq_{\pi'}t$ for $s, t \in S$, then
    (i)~$s \neq_{\pi} t$ or (ii)~$s'\in S$ exists such that
    $s \pijl{a} s'$ for some~$a \in \Act$ and, for all~$t' \in S$ such
    that~$t \pijl{a} t'$, it holds that~$s' \neq_{\pi} t'$, or the
    other way around with $t$ replacing~$s$.
  \end{itemize}
  A sequence of partitions $\Pi = ( \pi_0, \ldots, \pi_n )$ is called
  a valid partition sequence iff every successive partition~$\pi_{i}$,
  for $0 < i \leqslant n$, is a valid refinement of~$\pi_{i{-}1}$,
  and, moreover, the partition~$\pi_n$ is stable.
\end{defi}	

\noindent
When a partition~$\pi$ is refined into a partition~$\pi'$, 
states that are in the same block but can reach different blocks can
lead to a split of the block into smaller subsets, say
$k$~subsets. This means that a block $B \in \pi$ is split into
blocks $B_1, \ldots, B_k \in \pi'$. The least amount
of work is done for this operation if we create new blocks for the
least number of states.
That means if $B\in \pi$ is split into $B_1,\ldots, B_k \in \pi'$
and $B_1$ is the biggest block, then the states of $B_2,\ldots, B_k$
are moved to new blocks and the states of $B_1$ remain
in the current block that was holding $B$.
Therefore, we define the refinement costs~$\IRC$ for
the refinement~$\pi'$ of~$\pi$ by
\begin{displaymath}
  \IRC(\pi, \pi') =
  \textstyle{\sum}_{B {\in} \pi} \:
  \bigl ( \,
  |B| -
  \textstyle{\max}_{B' {\in} \pi' \colon B' \subseteq
    B} \: |B'|
  \, \bigr )
  \mkern1mu .
\end{displaymath}
For a sequence of refinements $\Pi = ( \pi_0, \ldots, \pi_n )$ we
write $\IRC(\Pi)$ for $\sum_{i=1}^{n} \: \IRC(\pi_{i{-}1}, \pi_{i})$.
For an LTS~$L$, we have
\begin{displaymath}
  \IRC(L) = \min \lc \IRC(\Pi) \mid
  \text{$\Pi$ a valid refinement sequence for~$L$} \rc
  \mkern1mu .
\end{displaymath}
Note that this complexity measure is different from the one used
in~\cite{berkholz2017tight}, which counts transitions. Our complexity
measure~$\IRC$ is bounded from above by the former.

\blankline

\noindent
In various examples below we characterise the states of LTSs by
sequences of bits.  The set of bits is denoted as
$\Bit = \{ \bit0, \bit1 \}$. Bit sequences of length up to and
including~$k$ are written as $\Bit^{{\leqslant} k}$. The
complement of a bit~$b$ is denoted
by~$\overline{b}$. Thus $\overline{\bit0} = \bit1$ and
$\overline{\bit1} = \bit0$. 
For two bit sequences $\sigma, \sigma'$, we write
$\sigma \preccurlyeq \sigma'$ to indicate that $\sigma$ is a prefix of
$\sigma'$ and write $\sigma \prec \sigma'$ iff $\sigma$ is a strict
prefix of $\sigma'$. For a bit sequence $\sigma \in \Bit^k$, for any
$i,j \leqslant k$, we write~$\sigma[i]$ to indicate the bit at
position~$i$ starting from position $1$. We write
$\sigma[i{:}j] = \sigma[i] \sigma[i{+}1] \cdots \sigma[j]$ to indicate
the subword from position~$i$ to position~$j$. 
Occasionally we use, for a bit sequence~$\sigma$, the
notation~$\sigma\Bit^k$ to denote
$\lc \sigma \sigma' \mid \sigma' \in \Bit^k \rc$, the set of all bit
sequences of length $|\sigma| + k$ having~$\sigma$ as prefix.

%% !TEX root =  main.tex
%% file roberts.tex

\section{Roberts' algorithm}
\label{sec:roberts}

Most algorithms to determine bisimulation for an LTS use partition
refinement. However, there are a few notable exceptions to this. For the
class of deterministic LTSs that have a singleton action alphabet,
deciding the coarsest stable partition, i.e.\ bisimilarity, requires
linear time only; a linear algorithm is due to Robert Paige, Robert
Tarjan, and Robert Bonic~\cite{paige1985linear}, which we therefore
aptly refer to as Roberts' algorithm.

The algorithm of~\cite{paige1985linear} exploits the specific
structure of a deterministic LTS with one action label. An example of
such a transition system is depicted in
Figure~\ref{fig:larger-roberts-example}, where the action label itself
has been suppressed and the initial partition is indicated by
single/double circled states. In general, a
deterministic LTS with one action label can be characterised as a
directed graph, possibly with self-loops, consisting of a number of
cycles of one or more states together with
root-directed trees with their root on a
cycle.  Below we refer to a cycle with the trees connected to it as an
end structure. In a deterministic LTS with one action label, each
state belongs to a unique end structure; it is on a cycle or has a
unique directed path leading to a cycle.

\begin{figure}[htb]
  \centering
  %% file larger-roberts-example.tex

\scalebox{0.85}{%
\begin{tikzpicture}

  \node [state, accepting] (c1) at (12.5,1.25) {$c_1$} ;
  \node [state] (c2) at (12.25,3) {$c_2$} ;
  \node [state, accepting] (c3) at (9.75,4) {$c_3$} ;
  \node [state, accepting] (c4) at ( 7.5,3) {$c_4$} ;
  \node [state] (c5) at (7.75,0.75) {$c_5$} ;
  \node [state, accepting] (c6) at (10.25,0) {$c_6$} ;

  \draw (c1) edge [out=60, in=-60] (c2) ;
  \draw (c2) edge [out=120, in=0] (c3) ;
  \draw (c3) edge [out=180, in=60] (c4) ;
  \draw (c4) edge [out=-120, in=120] (c5) ;
  \draw (c5) edge [out=-60, in=180] (c6) ;
  \draw (c6) edge [out=0, in=-120] (c1) ;

  \node [state, accepting] (s11) at (14.5,1.5) {$s_{11}$} ;
  \node [state] (s12) at (16.0,2.5) {$s_{12}$} ;
  \node [state, accepting] (s13) at (17.0,4.0) {$s_{13}$} ;
  \node [state, accepting] (s14) at (17.5,5.75) {$s_{14}$} ;
  \node [draw=none] (T1) at (18,6.25) {$T_1$} ;
  
  \draw (s14) edge [bend left=10] (s13) ;
  \draw (s13) edge [bend left=10] (s12) ;
  \draw (s12) edge [bend left=10] (s11) ;
  \draw (s11) edge [bend left=10] (c1) ;

  \node [state, accepting] (s21) at (14,5) {$s_{21}$} ;
  \node [state, accepting] (s22) at (14.75,7.25) {$s_{22}$} ;
  \node [state, accepting] (s23) at (12.75,7.5) {$s_{23}$} ;
  \node [draw=none] (T2) at (13.75,7.875) {$T_2$} ;
  
  \draw (s23) edge [bend left=10] (s21) ;
  \draw (s22) edge [bend left=10] (s21) ;
  \draw (s21) edge [bend left=20] (c2) ;

  \node [state] (s32) at (10.25,6.25) {$s_{32}$} ;
  \node [state, accepting] (s31) at (8.75,5.75) {$s_{31}$} ;
  \node [draw=none] (T3) at (9.25,6.5) {$T_3$} ;

  \draw (s32) edge [bend left=10] (c3) ;
  \draw (s31) edge [bend right=10] (c3) ;
  
  \node [state] (s41) at (5.75,4.75) {$s_{41}$} ;
  \node [state, accepting] (s42) at (7,6.75) {$s_{42}$} ;
  \node [state, accepting] (s43) at (5,6.5) {$s_{43}$} ;
  \node [state, accepting] (s44) at (6,7.75) {$s_{44}$} ;
  \node [draw=none] (T4) at (4.5,7) {$T_4$} ;

  \draw (s44) edge [bend right=10] (s41) ;
  \draw (s43) edge [bend right=10] (s41) ;
  \draw (s42) edge [bend right=10] (s41) ;
  \draw (s41) edge [bend right=10] (c4) ;

  \node [state] (s51) at (5.75,0.25) {$s_{51}$} ;
  \node [state, accepting] (s52) at (5.25,1.75) {$s_{52}$} ;
  \node [state] (s53) at (4,3.5) {$s_{53}$} ;
  \node [draw=none] (T5) at (3.5,4) {$T_5$} ;

  \draw (s53) edge [bend right=10] (s52) ;
  \draw (s52) edge [bend right=10] (c5) ;
  \draw (s51) edge [bend right=10] (c5) ;
\end{tikzpicture}
} %% end scalebox
  \caption{An example of a deterministic LTS with initial partition (action
    label suppressed).}
  \label{fig:larger-roberts-example} 
\end{figure}
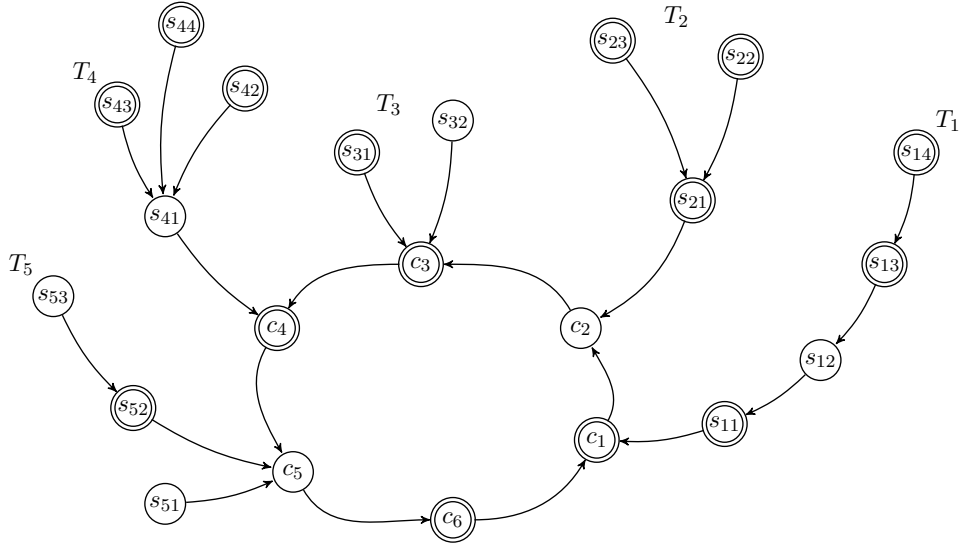

In brief, Roberts' algorithm for deterministic LTSs
with one action label can be described as follows
(see~\cite{paige1985linear} for more details).

\begin{enumerate}
\item As preparatory step, find all the end structures of the LTS,
  i.e.\ detect all cycles, and all
  root-oriented trees leading to cycles.
  
\item Observe that each state~$s$ on a cycle encodes a sequence of
  blocks, viz.\ the sequence starting from the block the state is in,
  and blocks encountered when following the transitions, up to the state 
  on the cycle that leads back to~$s$. This sequence of blocks forms 
  a word~$w$ over the alphabet of the initial partition, where each
  block of the initial partition is a symbol of this alphabet. The
  word~$w$ can be uniquely written as $v^{k}$ with~$v$
  of minimal length and~$k > 0$. The string~$v$ is
  referred to as the repeating prefix of the state~$s$.

  We consider the repeating prefixes of all states on the cycle and
  identify the lexicographically least repeating prefix~$v$.  This can
  be done in linear time in the size of the cycle using a string
  matching algorithm due to Knuth, Morris, and
  Pratt~\cite{KnuthMP77}. The lexicographically least repeating
  prefix~$v$ and the minimal number of transitions that is required to
  reach a state~$t$ from a state that has $v$ as repeating prefix,
  determines the bisimulation equivalence class of the state~$t$. We
  encode this bisimulation equivalence class by the corresponding
  rotation of the prefix~$v$.  This way the bisimulation class is
  established for all states on all cycles. By comparing least
  repeating prefixes bisimilarity across cycles can be detected.

\item By a backward calculation along the path leading from a state up
  in a tree down to their root on a cycle, the bisimilarity
  equivalence classes for the remaining states can subsequently be
  determined in linear time as well. The root of the tree is a state
  on the cycle and therefore has been assigned a string, hence a
  bisimulation class. We assign to a child the string of the parent
  prepended with the symbol of the initial class of the child.
\end{enumerate}

\noindent
\paragraph{Example}
The deterministic LTS of Figure~\ref{fig:larger-roberts-example} has a
single end structure, viz.\ the cycle formed by the states $c_1$
to~$c_6$ and five trees, tree~$T_1$ with leaf~$s_{14}$ and rooted
in~$c_1$, tree~$T_2$ with leaves $s_{22}$ and~$s_{23}$ rooted
in~$c_2$, tree~$T_3$ with leaves $s_{31}$ and~$s_{32}$ rooted
in~$c_3$, the tree~$T_4$ with leaves $s_{42}$, $s_{43}$, and~$s_{44}$
rooted in~$c_4$, and the tree~$T_5$ with leaves $s_{51}$ and~$s_{53}$
rooted in~$c_5$.

With the symbol~$A$ for an accepting, i.e.\ double-circled, state and
the symbol~$N$ for a non-accepting state, i.e.\ not double-circled,
we associate the following sequences of $A$'s and~$N$'s to the nodes
on the cycle:
\begin{displaymath}
  \begin{array}{l@{\,}c@{\,}lcl@{\,}c@{\,}lcl@{\,}c@{\,}l}
    c_1 & : & \mathit{ANAANA} && c_3 & : & \mathit{AANAAN} && c_5 & :
    & \mathit{NAANAA} \\ 
    c_2 & : & \mathit{NAANAA} && c_4 & : & \mathit{ANAANA} && c_6 & :
    & \mathit{AANAAN} \mkern1mu . \\ 
  \end{array}
\end{displaymath}
With $A$ preceding~$N$, the lexicographically least repeating prefix
is~$\mathit{AAN}$. We assign states~$c_1$ and~$c_4$ to the
bisimulation class of~$\mathit{ANA}$, states~$c_2$ and~$c_5$ to the
bisimulation class of~$\mathit{NAA}$, and states~$c_3$ and~$c_6$ to
the bisimulation class of~$\mathit{AAN}$. Here, $\mathit{ANA}$
and~$\mathit{NAA}$ are the 1-place and 2-place rotations
of~$\mathit{AAN}$, respectively.

Moving to tree~$T_1$ with root~$c_1$ having
string $\mathit{ANA}$ associated with it, we concatenate for
state~$s_{11}$ the symbol~$A$, since $s_{11}$ has been assigned in the
block of accepting states initially, followed by the
string~$\mathit{ANA}$ of~$c_1$, forming $\mathit{AANA}$ which is
reduced to~$\mathit{AAN}$ (exploiting the equality
$A(\mathit{ANA})^\omega = (\mathit{AAN})^\omega)$. Thus, we see that
the states~$s_{11}$ and~$c_6$ are bisimilar. Similarly, for
state~$s_{12}$ we prepend the symbol~$N$ of the child~$s_{12}$ to the
string $\mathit{AAN}$ of the parent~$s_{11}$ and obtain
$\mathit{NAAN}$ which is reduced to~$\mathit{NAA}$ (now exploiting the
equality $N(\mathit{AAN})^\omega = (\mathit{NAA})^\omega)$, as for
state~$c_5$.  Considering in contrast the state~$s_{41}$, with
symbol~$N$ and which is child of state~$c_4$ with
string~$\mathit{ANA}$, $s_{41}$~gets assigned the
string~$\mathit{NANA}$ (which can not be reduced, since
$N(\mathit{ANA})^\omega \neq (\mathit{NAN})^{\omega}$). Subsequently,
the states $s_{42}$ to~$s_{44}$ get assigned $\mathit{ANANA}$.

\blankline

\noindent
Roberts' algorithm solves the so-called single function coarsest
partition problem in~$\bigo{n}$ for a set of $n$ elements.  A
striking result is that any algorithm that is based on partition
refinement requires $\Omega(n \log n)$, as witnessed
in~\cite{berstel2004complexity,castiglione2008hopcroft}, where it is
shown that partition refinement algorithm of
Hopcroft~\cite{hopcroft1971DFAmin} cannot do better than
$\bigo{n \log n}$. Thus, Roberts' algorithm must use other techniques
than partition refinement.  Below we come back to this observation,
showing that it is not possible to use the ideas in Roberts' algorithm
to come up with a linear algorithm for computing bisimilarity for a
class of LTSs that either includes nondeterministic LTSs, or allows
LTSs to involve more than one action label.
	
%% file bk-is-n-log-n.tex

\section{$\calB_k$ is $\Omega(n \log n)$ for partition refinement}
\label{sec:hard_example}

In this section we introduce a family of deterministic LTSs called
bisplitters~$\calB_k$ for $k \geqslant 1$, on which the cost of any
partition refinement algorithm is $\Omega(n\log n)$, where $n$~is the
number of states. Building on the family of $\calB_k$'s, we propose in
Section~\ref{sec:two-action-action-set} a family of LTSs $\calC_k$
for which the cost of partition refinement is
$\Omega((n+m)\log n)$, where $m$~is the number of transitions.

\begin{figure}[htp]
  \begin{center}
    %% file drawing-b1-b2.tex

\scalebox{0.85}{%
\begin{tikzpicture} %B_1

\node [state] (b0) at (0,0) {$\bit{0}$};
\node [state, accepting] (b1) at (2,0) {$\bit{1}$};

\node (B1) at (-1,0.25) {$\calB_1$};
\end{tikzpicture}
} %% scalebox
\qquad \qquad \qquad
\scalebox{0.85}{%
\begin{tikzpicture} % B_2

\node [state] (b00) at (4,0) {$\bit{00}$};
\node [state] (b01) at (6,0) {$\bit{01}$};
\node [state, accepting] (b10) at (8,0) {$\bit{10}$};
\node [state, accepting] (b11) at (10,0) {$\bit{11}$};

\node at (11,0.25) {$\calB_2$};

\draw
(b00) edge [loop above] node [above] {$a_1$} (b00)
(b01) edge [bend  left] node [above] {$a_1$} (b10)
(b10) edge [loop above] node [above] {$a_1$} (b10)
(b11) edge [out=105, in=75] node [above] {$a_1$} (b00)
;

\end{tikzpicture}
} %% scalebox

\bigskip

\scalebox{0.85}{%
\begin{tikzpicture}[> = stealth', semithick, node distance=2cm] 

  % [scale=1,every node/.style={scale=1},node distance
  % = 2cm, initial text=] 

  \node[state] (000) at (0,2.5) {$\bit{000}$};
  \node[state] (001) at (2,2.5) {$\bit{001}$};
  \node[state] (010) at (4,2.5) {$\bit{010}$};
  \node[state] (011) at (6,2.5) {$\bit{011}$};
  \node[state, accepting] (100) at (0,0) {$\bit{100}$};
  \node[state, accepting] (101) at (2,0) {$\bit{101}$};
  \node[state, accepting] (110) at (4,0) {$\bit{110}$};
  \node[state, accepting] (111) at (6,0) {$\bit{111}$};

  \path[->] 
  (000) edge [loop above] node {$a_1, a_2$} (000)
  
  (001) edge [loop above] node {$a_1$} (001)
  (001) edge [bend left]  node [above] {$a_2$} (010)
  
  (010) edge [out=-135, in=45.5] node [pos=0.1375, above left] {$a_1$} (100)
  (010) edge [loop above] node [above] {$a_2$} (010)
  
  (011) edge [out=-135, in=44.5] node [pos=0.1, above left]{$a_1$} (100)
  (011) edge [out=105, in=75, looseness=1] node [above] {$a_2$} (000)
  
  (100) edge [<-,loop below] node{$a_1, a_2$} (100)
  
  (101) edge [bend right] node [below] {$a_2$} (110)
  (101) edge [<-, loop below] node [below] {$a_1$} (101)
  
  (110) edge [out=135, in=-45.5] node [pos=0.1, above right] {$a_1$} (000)
  (110) edge [<-, loop below] node [below] {$a_2$} (110)
  
  (111) edge [out=135, in=-44.5] node[pos=0.075, above right] {$a_1$} (000)
  (111) edge[out=-105, in=-75, looseness=1] node [below] {$a_2$} (100)
   ;

   \node (B3) at (7,-0.25) {$\calB_3$} ;
   \node (x)  at (-1,-0.25) {$\phantom{\calB_3}$} ;
\end{tikzpicture}
}
  \end{center}
  \caption{The bisplitters $\calB_1$, $\calB_2$, and
    $\calB_3$. Initial partitions are indicated by single-circled and double circled states.}
\label{fig:bisplitter12}
\end{figure}
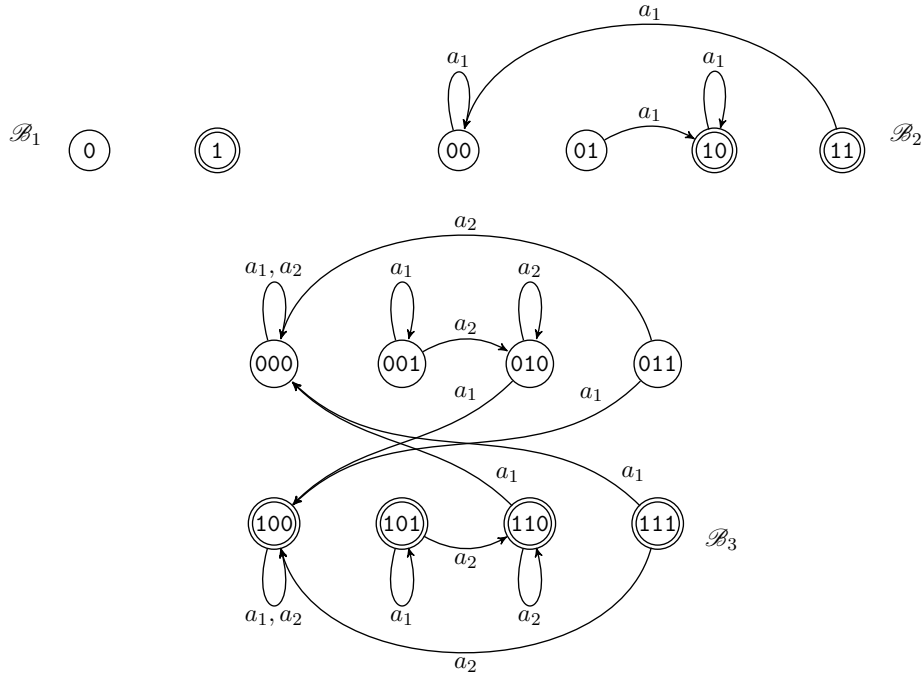

\begin{defi} 
  \label{def:bisplitter}
  For $k \geqslant 1$, the bisplitter~$\calB_k$ is defined as the LTS
  with initial partition
  $\calB_k = (\Bit^{k}, \Act_k, {\rightarrow},
  \pi_0^{k})$ where the set of states~$\Bit^{k}$ is the
  set of all bit strings of length~$k$,
  $\Act_k = \{ a_1, \ldots, a_{k{-}1} \}$ is a set of
  $k{-}1$~actions, the transition relation is given by
  \begin{align*}
    & \lc \sigma \pijl{a_i} \sigma \mid
      \sigma \in \Bit^{k} ,\, 1 \leqslant i < k \colon \sigma[i{+}1] = \bit0
      \rc \cup {} \smallskip \\
    & \qquad
      \lc \sigma \pijl{a_i} \sigma[1{:}i {-} 1] \mkern2mu
      \overline{\sigma[i]} \mkern2mu \bit{0}^{k{-}i} \mid
      \sigma \in \Bit^{k} ,\, 1 \leqslant i < k \colon \sigma[i{+}1] = \bit1
      \rc \mkern1mu ,
  \end{align*}
  and 
  \begin{math}
    \pi_0^{k} =
    \lbrace B_{\bit0} , B_{\bit1} \rbrace \text{, where $B_{\bit0}=\bit0 \Bit^{k-1}$ and $B_{\bit1}=\bit1 \Bit^{k-1}$,}
  \end{math}
  is the initial partition.
\end{defi}

\noindent
We see that the bisplitter~$\calB_k$ has~$2^k$ states, viz.\
all bit strings of length~$k$, and $k{-}1$~different actions. The
LTS~$\calB_k$ is deterministic. Each state has exactly one outgoing
transition for each action~$a_i$, $1 \leqslant i < k$. Thus, $\calB_k$
has $(k{-}1) \mkern0.5mu 2^{k}$~transitions: (i)~a self-loop for
bitstring~$\sigma$ with label~$a_i$ if the $i{+}1$-th
bit~$\sigma[i{+}1]$ of~$\sigma$ equals~$\bit0$; (ii)~otherwise, i.e.\
when bit~$\sigma[i{+}1]$ equals~$\bit1$, the bitstring~$\sigma$ has
a transition for label~$a_i$ to the
bitstring that equals the first $i{-}1$ bits of~$\sigma$, flips the
$i$-th bit of~$\sigma$, and has $k{-}i$ many~$\bit0$'s following. The
initial partition~$\pi_0^{k}$ distinguishes the bit strings
starting with~$\bit0$ from those starting with~$\bit1$.

Drawings of the first three bisplitters $\calB_1$ to~$\calB_3$ are
given in Figure~\ref{fig:bisplitter12}. We see in the picture
of~$\calB_3$ for example for the
bitstring~$\sigma = \bit1 \bit0 \bit1$, an $a_1$-transition to itself,
as $\sigma[1{+}1] = \sigma[2] = \bit0$ and an $a_2$-transition to the
bitstring $\bit1 \bit1 \bit0$, as $\sigma[2{+}1] = \sigma[3] = \bit1$,
$\overline{\sigma[2]} = \overline{\bit0} = \bit1$, and
$\sigma[1] \overline{\sigma[2]} \bit0 = \bit{110}$. As another
illustration, for the bitstring $\sigma = \bit{011}$ we have an
$a_1$-transition to $\overline{\sigma[1]} \bit{00} = \bit{100}$ since
$\sigma[1{+}1] = \bit1$ and an $a_2$-transition to
$\sigma[1] \overline{\sigma[2]} \bit0 = \bit{000}$ since
$\sigma[2{+}1] = \bit1$.

Also note that $\calB_3$ contains two copies of $\calB_2$. In the
copies, the action label~$a_1$ of~$\calB_2$ maps to the action
label~$a_2$ in~$\calB_3$, and each state associated with a
bitstring $\sigma \in \Bit^{2}$ produces two copies in~$\calB_3$; one
copy is obtained by the
mapping~$\sigma \mapsto \bit0 \mkern0.5mu \sigma$ and the other copy
is obtained by the mapping $\sigma \mapsto \bit1\mkern0.5mu \sigma$.
In general, bisplitter~$\calB_k$ is twice embedded in
bisplitter~$\calB_{k{+}1}$ via the mappings
$\sigma \mapsto \bit0 \mkern0.5mu \sigma$ and
$\sigma \mapsto \bit1\mkern0.5mu \sigma$ from $\Bit^{k}$
to~$\Bit^{k{+}1}$ for the states using the mapping
$a_i \mapsto a_{i{+}1}$ from $\calA_k$ to~$\calA_{k{+}1}$ for the
action labels. Note that initial partitions are not respected.

\begin{defi}
  For any string~$\sigma \in \Bit^{{\leqslant}k}$, we define the
  prefix block~$B_{\sigma}$ of~$\calB_k$ to be the block
  $B_{\sigma} = \lc \sigma'\in \Bit^{k} \mid \sigma \preccurlyeq \sigma' \rc$.
\end{defi}

\noindent
The following lemma collects a number of results related to prefix
blocks that we need in our complexity analysis for computing
bisimilarity for the bisplitters.

\begin{lem}
  \label{thm:facts}
  Let $k \geqslant 1$ and consider the LTS with initial
  partition~$\calB_k = (\Bit^k, \Act_k, {\rightarrow},
  \pi_0^{k})$, i.e.\
  the $k$-th bisplitter. Let the sequence
  $\Pi = ( \pi_0^{k}, \ldots, \pi_n )$ be a valid refinement
  sequence for~$\calB_k$. Then it holds that
  \begin{enumerate}[(a)]
  \item \label{thm:facts:1} Every partition $\pi_i$ in~$\mkern1mu \Pi$
    contains prefix blocks only.
  \item \label{thm:facts:2} If partition~$\pi_i$ in~$\mkern1mu \Pi$
    contains a prefix block~$B_\sigma$ with $|\sigma| < k$, then
    $\pi_i$ is not stable.
  \item \label{thm:facts:3} If $B_{\sigma}$ is
    in~$\pi_{i}$, for $0 \leqslant i < n$, then either
    $B_{\sigma} \in \pi_{i{+}1}$, or
    $B_{\sigma \bit{1}} \in \pi_{i{+}1}$
    and~$B_{\sigma \bit{0}} \in \pi_{i{+}1}$.
  \end{enumerate}
\end{lem}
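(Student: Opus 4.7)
The plan is to establish a single structural lemma from which all three claims follow quickly: if $\pi$ is any partition of $\Bit^k$ consisting only of prefix blocks, then inside every prefix block $B_\sigma\in\pi$ with $|\sigma|=\ell<k$, the equivalence induced by the witness condition~(ii) of Definition~\ref{def:bis_valid} is exactly ``agree on the $(\ell{+}1)$-th bit''. Consequently, the unique non-trivial valid split of $B_\sigma$ is into $B_{\sigma\bit{0}}$ and $B_{\sigma\bit{1}}$, witnessed by the action $a_\ell$.

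To prove this structural lemma, I would fix $B_\sigma\in\pi$ with $|\sigma|=\ell$ and enumerate, for each action $a_j$ with $1\leqslant j<k$ and each $\rho\in B_\sigma$, the target block that $\rho$ reaches under $a_j$, splitting on the relation of $j$ with $\ell$. If $j<\ell$, then $\rho[j{+}1]=\sigma[j{+}1]$ is forced by membership in $B_\sigma$, so every $\rho$ behaves uniformly, either via a self-loop (case $\sigma[j{+}1]=\bit{0}$) or via a transition to the single fixed state $\sigma[1{:}j{-}1]\mkern1mu\overline{\sigma[j]}\mkern1mu\bit{0}^{k-j}$; hence $a_j$ cannot serve as a distinguishing witness. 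If $j>\ell$, the $a_j$-target of $\rho\in B_\sigma$ is either $\rho$ itself or the string $\rho[1{:}j{-}1]\mkern1mu\overline{\rho[j]}\mkern1mu\bit{0}^{k-j}$, whose first $\ell$ bits coincide with $\rho[1{:}\ell]=\sigma$; in either case the target lies in $B_\sigma$, so again no witness is produced. The interesting case $j=\ell$ separates $\rho$ according to $\rho[\ell{+}1]$: the value $\bit{0}$ gives a self-loop (target in $B_\sigma$), while $\bit{1}$ sends $\rho$ to $\sigma[1{:}\ell{-}1]\mkern1mu\overline{\sigma[\ell]}\mkern1mu\bit{0}^{k-\ell}$, a state whose first $\ell$ bits differ from $\sigma$ in the $\ell$-th position and which therefore lies in a prefix block of $\pi$ disjoint from $B_\sigma$.

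From this case analysis, any refinement of $\pi$ justified by condition~(ii) must place $B_{\sigma\bit{0}}$ and $B_{\sigma\bit{1}}$ in different blocks but cannot separate two states of $B_\sigma$ agreeing on the $(\ell{+}1)$-th bit. Part~(a) then follows by induction on $i$: the initial partition $\pi_0^k=\{B_{\bit{0}},B_{\bit{1}}\}$ is already a prefix-block partition, and each valid step leaves each prefix block intact or replaces it by its two length-$(\ell{+}1)$ prefix sub-blocks. Part~(c) is exactly this dichotomy applied to the step from $\pi_i$ to $\pi_{i+1}$. For part~(b), the two states $\sigma\bit{0}^{k-\ell}$ and $\sigma\bit{1}\bit{0}^{k-\ell-1}$ both live in $B_\sigma$, but under $a_\ell$ the former self-loops (reaching $B_\sigma$) while the latter reaches the disjoint prefix block containing $\sigma[1{:}\ell{-}1]\mkern1mu\overline{\sigma[\ell]}\mkern1mu\bit{0}^{k-\ell}$, so $B_\sigma$ is unstable.

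The main obstacle is the $j>\ell$ case: I must verify carefully that the non-self-loop target $\rho[1{:}j{-}1]\mkern1mu\overline{\rho[j]}\mkern1mu\bit{0}^{k-j}$ always has $\sigma$ as a prefix, so that $a_j$ genuinely cannot separate states within $B_\sigma$. This amounts to the bit-index observation that $j{-}1\geqslant\ell$, so positions $1,\ldots,\ell$ of the target coincide with $\rho[1{:}\ell]=\sigma$, while the flipped bit $\rho[j]$ and the zero padding $\bit{0}^{k-j}$ sit strictly beyond position $\ell$; it is elementary, but needs to be written out cleanly to avoid off-by-one errors in the transition formula of Definition~\ref{def:bisplitter}.
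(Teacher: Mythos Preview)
Your proposal is correct and follows essentially the same approach as the paper: both rest on the case analysis $j<\ell$, $j=\ell$, $j>\ell$ for the action index relative to the prefix length, with the same conclusions in each case. The difference is organizational: you extract a single structural lemma (the only one-step witness inside $B_\sigma$ is the $(\ell{+}1)$-th bit) and derive all three parts from it, whereas the paper runs the case analysis twice---once in~(a) via a contradiction argument using the longest common prefix $\sigma\theta$ of a hypothetical non-prefix block, and once again in~(c) directly. Your packaging is a bit cleaner; the mathematical content is the same.
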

      
\begin{proof}
  \ref{thm:facts:1} Initially, for
  $\pi_0^{k} = \{ B_{\bit{0}}, B_{\bit{1}} \}$, both
  its blocks are prefix blocks by definition.
  We prove, if partition~$\pi_i$, for $0 \leqslant i < n$, consists
  of prefix blocks only, then all blocks in~$\pi_{i+1}$ are
  prefix blocks as well.

  Assume, to arrive at a contradiction, that there is a block
  $B \in \pi_{i+1}$ that is not a prefix block. Because $\pi_{i+1}$ is
  a refinement of $\pi_i$, we have $B \subseteq B_{\sigma}$ for some
  prefix block $B_{\sigma} \in \pi_i$. This means that $\sigma$ is a
  common prefix of all elements of $B$. We can choose $\theta$ such
  that $\sigma\theta$ is the longest common prefix of all elements of
  $B$. Since every singleton of $\mathbb{B}^k$ is a prefix block,
  $B$~is not a singleton. This implies that $|\sigma\theta| < k$ and
  that there are elements $\sigma_1$ and~$\sigma_2$ of~$B$ such that
  $\sigma\theta \bit{0}$ is a prefix of~$\sigma_1$ and
  $\sigma\theta 1$~is a prefix of~$\sigma_2$. Because $B$~is not a
  prefix block by assumption, there must exist a
  string~$\tau \in \mathbb{B}^k$ with prefix~$\sigma\theta$ such
  that $\tau \not\in B$.  Obviously, we have either
  (i)~$\sigma\theta \mkern0.5mu \bit0$ is a prefix of~$\tau$, or
  (ii)~$\sigma\theta \mkern0.5mu \bit1$ is a prefix of~$\tau$.  We
  will show that in both these cases $\tau$ in fact belongs to~$B$,
  thus arriving at a contradiction.
  \begin{enumerate}[(i)]
  \item Suppose $\sigma\theta \mkern0.5mu \bit0$ is a prefix
    of~$\tau$. We argue that $\tau$ and~$\sigma_1$
    belong to the same block in~$\pi_{i+1}$ 
    since, for each~$a_j$, $1 \leqslant j < k$, the target states
    $\sigma_1'$ and~$\tau'$ of the transitions $\sigma_1 \pijl{a_j} \sigma_1'$
    and $\tau \pijl{a_j} \tau'$ belong to the same block
    of~$\pi_i$. There are three cases:
    \begin{itemize}
    \item $j < |\sigma\theta|$: Since $\sigma\theta$ is a prefix of
      both $\sigma_1$ and $\tau$, we have $\sigma_1[j{+}1] = \tau[j{+}1]$.
      \begin{itemize}
      \item If $\sigma_1[j{+}1] = \tau[j{+}1] = \bit{0}$, then
        $\sigma_1' = \sigma_1$ and $\tau' = \tau$.  Obviously, both
        $\sigma_1'$ and $\tau'$ belong to~$B_{\sigma}$ (since
        $\sigma_1$ and $\tau$ belong to~$B_{\sigma}$).
      \item If $\sigma_1[j{+}1] = \tau[j{+}1] = \bit{1}$, then both $\sigma_1'$
        and $\tau'$ are of the form
        $\varrho[1{:}j{-}1] \mkern0.5mu \overline{\varrho[j]} \mkern0.5mu
        \bit0^{k-j}$ where $\varrho=\sigma\theta$, and we have
        $\sigma_1' = \tau'$. So, they clearly belong to the same
        block of~$\pi_i$.
      \end{itemize}
    \item $j = |\sigma\theta|$: Since
      $\sigma_1[j{+}1] = \tau[j{+}1] = \bit{0}$, we have
      $\sigma_1' = \sigma_1$ and $\tau' = \tau$, and
      hence both $\sigma_1'$ and $\tau'$ belong to
      $B_{\sigma}$.
    \item $j > |\sigma\theta|$:
        In this case, for a string of the
        form~$\sigma \theta \varrho$, an $a_j$-transition leads to a
        string of the form~$\sigma \theta \varrho'$. In particular
      this means that if $j > |\sigma\theta|$ and
      $\sigma_1 \pijl{a_j} \sigma_1'$ and $\tau \pijl{a_j} \tau'$,
      then $\sigma\theta$ is a prefix of both $\sigma_1'$ and $\tau'$,
      and $\sigma_1'$ and $\tau'$ belong to $B_{\sigma}$ in $\pi_i$.
    \end{itemize}
  \item Now, suppose $\sigma\theta \bit{1}$ is a prefix of~$\tau$. We
    argue that $\tau$ and~$\sigma_2$ belong to the
    same block in~$\pi_{i+1}$ because for each $a_j$ (where
    $1 \leqslant j < k$) the transitions $\sigma_2 \pijl{a_j} \sigma_2'$ and
      $\tau \pijl{a_j} \tau'$ lead to the same block
    of~$\pi_i$. Also, here there are three cases:
    \begin{itemize}
    \item $j < |\sigma\theta|$: Similar as for (i).
    \item $j = |\sigma\theta|$: Since $\sigma_1[j{+}1] = \tau[j{+}1] = \bit{1}$,
      we have
      $\sigma_1' = \tau' = \varrho[1{:}j{-}1]
      \mkern0.5mu \overline{\varrho[j]} \mkern0.5mu \bit0^{k{-}1}$ where
      $\varrho = \sigma \theta$, so clearly $\sigma_1'$ and $\tau'$
      are in a same block in~$\pi_i$.
    \item $j > |\sigma\theta|$: Similar as for~(i).
    \end{itemize}
    Thus, both in case (i) and in case~(ii) we see that we must have
    $\tau \in B$, contradicting the choice for~$\tau$.
  \end{enumerate}

\noindent
  \ref{thm:facts:2}
  Suppose $B_\sigma \in \pi_i$ and $| \sigma |\ = \ell < k$. Let
  $\theta \in \Bit^\ast$ be such that
  $\sigma_1 = \sigma \bit0 \theta$ and~$\sigma_2 = \sigma \bit1
  \theta$. Then we have $\sigma_1 \pijl{a_\ell} \sigma_1 \in B_\sigma$ and
  $\sigma_2 \pijl{a_\ell} \sigma[1{:}\ell{-}1] \mkern1mu
  \overline{\sigma[\ell]} \mkern1mu {\bit0}^{k{-}\ell} \notin
  B_\sigma$. Thus $B_\sigma$ isn't stable, and hence~$\pi_i$ isn't
  either.

  \medskip

  \noindent
  \ref{thm:facts:3} We show that for a prefix block
  $B_\sigma\in\pi_i$, a bit $\bit{b} \in \Bit$ and all
  $\theta, \theta' \in \Bit^{k{-}(|\sigma|{+}1)}$, the states
  $\sigma_1 = \sigma \mkern1mu \bit{b} \mkern1mu \theta$ and
  $\sigma_2 = \sigma \mkern1mu \bit{b} \mkern1mu \theta'$ are not
  split by action~$a_j$, for $1 \leqslant j < k$, and thus are in the
  same block of~$\pi_{i{+}1}$.
  Pick~$j$, $1 \leqslant j < k$, and suppose
  $\sigma_1 \pijl{a_j} \sigma'_1$~and
  $\sigma_2 \pijl{a_j} \sigma'_2$. If $j \leqslant |\sigma|$
  and $\sigma[j+1] = \bit0$ then $\sigma_1' = \sigma_1$
  and~$\sigma'_2 = \sigma_2$, hence both
  $\sigma'_1, \sigma'_2 \in B_\sigma$ don't split for~$a_j$. If
  $j \leqslant |\sigma|$ and $\sigma[j+1] = \bit1$ then
  $\sigma'_1 = \sigma'_2$ and don't split for~$a_j$ either.
  If~$j > |\sigma|$ then both $\sigma'_1, \sigma'_2 \in B_\sigma$ and
  don't split for~$a_j$ either.
\end{proof}

\noindent
With the help of the above lemma, clarifying the form of the
partitions in a valid refinement sequence for the bisplitter family,
we are able to obtain a lowerbound for any algorithm exploiting
partition refinement to compute bisimilarity.
\begin{thm}
  \label{thm:inherent_complexity}
  For any $k > 1$, application of partition refinement to the
  bisplitter~$\calB_k$ has refinement costs
  $\IRC(\calB_k) \in \Omega(n \log n)$ where $n = 2^k$ is the number
  of states of~$\calB_k$.
\end{thm}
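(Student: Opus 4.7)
The plan is to use Lemma~\ref{thm:facts} to reduce the problem to a clean combinatorial counting argument. First, I would argue that the stable partition $\pi_n$ at the end of the sequence must coincide with the partition of $\Bit^k$ into singletons: by part~\ref{thm:facts:1} its blocks are prefix blocks $B_\sigma$, and by part~\ref{thm:facts:2} stability rules out $|\sigma| < k$, forcing $|\sigma| = k$ throughout.

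Second, I would use part~\ref{thm:facts:3} of the lemma to view the whole refinement sequence as carving out the full binary prefix tree of depth~$k$: every prefix block $B_\sigma$ with $|\sigma| < k$ that ever appears is split at some step into exactly the two equal halves $B_{\sigma\bit0}$ and $B_{\sigma\bit1}$. For such a split, the definition of $\IRC$ gives a local contribution of $|B_\sigma| - \max(|B_{\sigma\bit0}|,|B_{\sigma\bit1}|) = 2^{k-|\sigma|-1}$, independently of when the split occurs or how many blocks are split in the same step.

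Third, I would group the splits by prefix length $\ell = |\sigma|$. Starting from $\pi_0^k = \{B_{\bit0}, B_{\bit1}\}$, for each $\ell$ with $1 \leqslant \ell < k$ the $2^\ell$ prefix blocks $B_\sigma$ of that length must each be split exactly once, each contributing $2^{k-\ell-1}$. The per-level total is therefore $2^\ell \cdot 2^{k-\ell-1} = 2^{k-1} = n/2$, independent of~$\ell$. Summing the $k-1$ levels yields $\IRC(\calB_k) \geqslant (k-1)\mkern1mu n/2$, which is in $\Omega(n\log n)$ since $k = \log_2 n$.

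Because Lemma~\ref{thm:facts} has already done the structural work, no real obstacle remains and the proof is a level-by-level cost accounting. The one subtlety worth stating explicitly is that the $\IRC$ formula charges each split by the size of the \emph{smaller} half, so bundling several splits into one refinement step cannot reduce the total cost; the lowerbound is therefore unconditional over the scheduling of splits chosen by the algorithm.
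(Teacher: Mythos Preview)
Your proposal is correct and follows essentially the same approach as the paper: use Lemma~\ref{thm:facts}(a),(b) to force the final partition to be all singletons, use part~(c) to see that every non-singleton prefix block splits exactly once into two equal halves at cost $2^{k-|\sigma|-1}$, and sum level by level to obtain $(k-1)\,2^{k-1} = \tfrac{1}{2}n\log\tfrac{1}{2}n \in \Omega(n\log n)$. Your added remark that bundling several splits into one refinement step cannot lower the total $\IRC$ is a useful clarification that the paper leaves implicit.
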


\begin{proof}
  Let $\Pi = ( \pi_0^{k}, \ldots, \pi_m )$ be a valid
  refinement sequence for~$\calB_k$. By items \ref{thm:facts:1}
  and~\ref{thm:facts:2} of Lemma~\ref{thm:facts}, we have
  $\pi_m = \lc \{s\} \mid s \in \Bit^k \rc$ since $\pi_m$~is stable
  and thus contains singleton blocks only. Item~\ref{thm:facts:3} of
  Lemma~\ref{thm:facts} implies that in every refinement
  step~$(\pi_i, \pi_{i{+}1})$ a block is either kept or it is refined
  in two prefix blocks of equal size. The cost of refining the
  block~$B_\sigma$, for $1 \leqslant |\sigma| \leqslant k{-}1$,
  into~$B_{\sigma\bit{0}}$ and $B_{\sigma\bit{1}}$ is the number of
  states in~$B_{\sigma\bit{0}}$ or the number of states 
  in~$B_{\sigma\bit{1}}$, which are the same and are equal to
  $\frac{1}{2} 2^{\mkern1mu k{-}|\sigma|}$.
  Therefore, we have
  \begin{displaymath}
    \IRC(\Pi) = \sum_{\ell=1}^{k{-}1} \: 2^\ell \frac{1}{2}
    2^{k-\ell} =
    \sum_{\ell=1}^{k{-}1} \: \frac{1}{2} 2^{k} =
    (k{-}1) 2^{k{-}1} \mkern1mu .
  \end{displaymath}
  With $n$~the number of states of~$\calB_k$, we have that $n=2^k$,
  thus $k{-}1 = \log \frac 12 n$. Hence,
  $\IRC(\Pi) = \frac{1}{2} n \log \frac12 n$ which is in
  $\Omega(n\log n)$.

  Thus, for every valid partition refinement sequence~$\Pi$
  for~$\calB_k$ we have $\IRC(\Pi) \in \Omega(n \log n)$. In particular this bound applies to
  the valid refinement sequence of minimal cost, and hence we conclude $\IRC(\calB_k) \in \Omega{(n \log n)}$.
\end{proof}

% !TEX root =  main.tex
%% file bk-with-oracle.tex

\section{$\calB_k$ is $\Omega(n\log n)$ for partition refinement with
  an oracle}
\label{sec:oracle}

In the previous section we have shown that computing bisimilarity with
partition refinement for the family of bisplitters is
$\Omega(n \log n)$. The bisplitters are deterministic LTSs but have
growing actions sets. For the corner case of deterministic transition
systems with a singleton action set, Roberts' algorithm
discussed in Section~\ref{sec:roberts} establishes
bisimilarity in~$\bigo{n}$. Linearity was obtained by the trick of
calculating the (lexicographically) least repeating prefix on the
cycles in the transition system.

One may wonder whether an approach different from partition refinement
of establishing bisimulation equivalence classes for transition
systems with non-degenerate action sets can provide a linear
performance. In order to capture the approach
of~\cite{paige1985linear}, we augment the class of partition
refinement algorithms with an oracle. At the start of the algorithm
the oracle can be consulted to identify the bisimulation classes for
designated states, viz.\ for those that are in a so-called \emph{end
  structure}, the counterpart of the cycles in Roberts' algorithm. 
This results in a refinement of the initial partition;
partition refinement then starts from the updated partition.

Thus, we can ask the oracle to provide the bisimulation classes of all
states in an end structure of the input LTS, also including
bisimilar states of the LTS not in an end structure. This yields a new
partition, viz.\ the common refinement of the initial partition, on
the one side, and the partition induced by the bisimulation
equivalence classes as given by the oracle and the complement of their
union, on the other side. Hence, the work that remains to be done is
establishing the bisimulation equivalence classes, with respect to the
initial partition, for the states not bisimilar to any in an end
structure.

We will establish that a partition refinement algorithm that can
consult an oracle cannot improve upon the complexity
of computing bisimulation by partition refinement. We first define
the notion of an end structure of an LTS formally as well as the
associated notion of an end structure partition.

\begin{defi}
  \label{def-end-structure}
  Given an LTS $L = (S, \Act, {\rightarrow}, \pi_0)$ with an initial
  partition, a non-empty subset~$S' \subseteq S$ is called an
  \emph{end structure} of~$L$ iff $S'$~is a minimal set of states
  that is closed under all transitions, $L[S'] \subseteq S'$
  and for all~$S'' \subseteq S$ it holds that
  $L[S''] \subseteq S''$ and $S'\cap S'' \neq \emptyset$ implies 
  $S' \subseteq S''$. Moreover, $\ES(L) = \lc S' \subseteq S \mid
  \text{$S'$ end structure of~$L$}\rc$, $\mathit{ES}(L) = \bigcup \ES(L)$,
  and the partition~$\piES$ such that
  \begin{displaymath}
    \piES = \lc [s]^{\superbis}_L \mid
    s \in \mathit{ES}(L) \rc \cup
    \lc B \setminus
    \textstyle{\bigcup_{s {\in} \mathit{ES}(L)}} \: [s]^{\superbis}_L \mid
    B \in \pi_0 \rc
    \setminus \{ \emptyset \}
  \end{displaymath}
  is called the end structure partition of~$L$.
\end{defi}

\noindent
Like the cycles exploited in Roberts' algorithm, an LTS can have
multiple end structures. The end structure partition~$\piES$ consists
of all the bisimilarity equivalence classes of~$L$ that include at
least one state of an end structure, completed with blocks
holding the remaining states, if non-empty. So, for every state~$s$ of
an end structure, the end structure partition has identified all
states that are bisimilar to state~$s$ and separates $s$ and its bisimilar states from the rest of the LTS\@. The other states are assigned in
the end structure partition to the blocks just as the initial
partition does.

\noindent
\paragraph{Example} In the LTS of
Figure~\ref{fig:larger-roberts-example} the cycle of $c_1$ to~$c_6$ is
the only end structure.
All states have a path to the cycle, hence every non-empty set that is
  closed under transitions will contain the cycle. Would the LTS have
  contained any isolated states, these would be end structures by themselves.
Thus, consultation of the oracle leads to the
refinement of the initial structure~$\pi_0$ that consists of the two
blocks
\begin{align*}
  & \{ c_1, c_3, c_4, c_6 \} \cup \{ s_{11} , s_{13}, s_{14} \} \cup \{
    s_{21}, s_{22}, s_{23} \} \cup \{ s_{31} \} \cup \{ s_{42}, s_{43},
    s_{44} \} \cup \{ s_{52} \}
    \smallskip \\
  & \qquad
    \text{and} \ 
    \{ c_2, c_5 \} \cup \{ s_{12} \} \cup \{ s_{32} \} \cup \{ s_{41} \}
    \cup \{ s_{51}, s_{53} \}
    \intertext{into end structure partition $\piES$ with five blocks,
    viz.\ the three blocks} 
  & \{ c_1, c_4 \} \cup \{ s_{13} ,\, s_{21} ,\, s_{52} \}  ,\
    \{ c_2, c_5 \} \cup \{ s_{12} , s_{32} \} ,\ \text{and} \ 
    \{ c_3, c_6 \} \cup \{ s_{11} ,\, s_{14} ,\, s_{22} ,\, s_{23} \}
  \\
  \intertext{consisting of the states on the cycle together with the
  states that are bisimilar,
  on the one hand, and the two blocks with the remaining
  states} 
  & \{  s_{31} \} \cup \{ s_{42}, s_{43}, s_{44} \}
    \  \text{and} \ 
    \{ s_{41} \} \cup \{ s_{51}, s_{53} \} 
\end{align*}
on the other hand.

\blankline

\begin{lem}
  Let $L = (S, \Act, {\rightarrow}, \pi_0)$ be a deterministic LTS\@.
  \begin{enumerate}[(a)]
  \item \label{es-fact-1} If $| \Act | = 1$ then $\ES(L)$ consists of
    all cycles in~$L$.
  \item \label{es-fact-2} Every $s\in S$ has a path to an end
    structure of~$L$.
\end{enumerate}
\end{lem}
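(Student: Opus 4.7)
The plan is to first establish a convenient reformulation of the end-structure definition: a set $S'$ is an end structure if and only if it is a minimal non-empty subset of $S$ closed under transitions. The $(\Leftarrow)$ direction follows by observing that if $S''$ is closed with $S' \cap S'' \neq \emptyset$, then $S' \cap S''$ is itself a non-empty closed subset of $S'$, so minimality forces $S' \cap S'' = S'$, i.e.\ $S' \subseteq S''$. The $(\Rightarrow)$ direction is the specialisation to $S'' \subseteq S'$. This observation relies only on the fact that the intersection of two closed sets is closed, which holds in any LTS.

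For part~\ref{es-fact-1}, identify the single-action deterministic LTS with the function $f\colon S \to S$ given by $f(s) = L(s, a)$ where $\Act = \{ a \}$. The state set~$C$ of any cycle of~$f$ is closed, and iterating $f$ from any $c \in C$ traverses all of~$C$, so every non-empty closed subset of~$C$ equals~$C$; hence $C$ is minimal. Conversely, given a minimal non-empty closed~$S'$, pick $s \in S'$; by finiteness of~$S$ the orbit $s, f(s), f^2(s), \dots$, which remains in~$S'$ by closure, must revisit a state, producing a cycle $C \subseteq S'$. Closedness and non-emptiness of~$C$ combined with minimality of~$S'$ yield $S' = C$.

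For part~\ref{es-fact-2}, let $R_s \subseteq S$ be the forward reachable set from~$s$ (including~$s$ itself). Then $R_s$ is non-empty and closed, since any transition out of a reachable state lands on a reachable state. The family of non-empty closed subsets of~$R_s$ is non-empty and finite, so it contains a minimal element~$E$. The crux is to upgrade this local minimality inside~$R_s$ to global minimality inside~$S$: for any closed $E' \subseteq S$ with $E \cap E' \neq \emptyset$, the set $E \cap E'$ is again closed, non-empty, and contained in~$R_s$ (because $E \subseteq R_s$), so by minimality of~$E$ in~$R_s$ we get $E \cap E' = E$, hence $E \subseteq E'$. Thus $E$ is an end structure, and since $E \subseteq R_s$, the state~$s$ has a path to~$E$. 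This transfer from local to global minimality is the only mildly delicate step in the argument, and it is delivered precisely by the intersection observation from the first paragraph.
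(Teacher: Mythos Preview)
Your proof is correct and follows essentially the same approach as the paper: for~(b) both take the forward-reachable set from~$s$ and pick a minimal non-empty closed subset of it, and for~(a) both reduce to the orbit/lasso structure of a single function on a finite set. Your version is more detailed---you argue both inclusions in~(a) and spell out the local-to-global minimality transfer in~(b) via the intersection-of-closed-sets observation---whereas the paper's proof leaves these points implicit.
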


\begin{proof}
  \ref{es-fact-1} 
  Since an end structure~$S'$ is closed under
  transitions, $S'$~is a lasso. Because $S'$ is minimal and non-empty,
  it follows that $S'$ is a cycle.

  \ref{es-fact-2}
  Let~$U = \lc t \in S \mid s \pijlster{w} t ,\, w \in \calA^\ast \rc$
  be the set of states reachable from state~$s$. Then the set~$U$ is
  closed under transitions. The minimal non-empty subset
  $U' \subseteq U$ which is still closed under transitions is an end
  structure of~$L$ and can be reached by~$s$.
\end{proof}

\noindent
Next we enhance the notion of a partition refinement
algorithm. Now, an oracle can be consulted for the states in
the end structures. In this approach, the initial partition is
replaced by a partition in which all bisimilarity equivalence classes
of states in end structures are split off from the original blocks.

\begin{defi}
  A partition refinement algorithm with end structure oracle yields
  for an LTS $L = (S, \Act, {\rightarrow}, \pi_0)$ a valid refinement
  sequence~$\Pi = ( \pi'_0, \pi_1, \ldots, \pi_n)$ where $\pi'_0$ is
  the end structure partition of~$L$. The partition~$\pi'_0$ is called
  the updated initial partition of~$L$.
\end{defi}

\noindent
As Roberts' algorithm witnesses, in the case of a singleton action set
the availability of an end structure oracle results in an algorithm with
linear asymptotic performance. In the remainder of this section we confirm
that in the case of more action labels the end structure does not
help. The next lemma states that the amount of work required for the
bisplitter~$\calB_k$ by a partition refinement algorithm enhanced with
an oracle, dealing with end structures, is at least the amount of work
needed by a partition refinement algorithm without oracle for the
bisplitter~$\calB_{k{-}2}$.

\begin{lem}
  \label{lemma:oracle_projection}
  For the bisplitter $\calB_k = (S, \Act, {\rightarrow}, \pi_0)$,
  for some~$k > 2$, let $\pi_0'$ be the updated initial
  partition. Then every valid refinement sequence
  $\Pi = ( \pi_0', \pi_2, \ldots, \pi_n )$ for the updated bisplitter
  $\calB'_k = (S, \Act, {\rightarrow}, \pi'_0)$ satisfies
  $\IRC(\Pi) \geqslant \IRC(\calB_{k{-}2})$.
\end{lem}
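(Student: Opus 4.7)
The plan is to embed a complete copy of $\calB_{k-2}$ into $\calB_k$ in a region not refined by the oracle, and to project any valid refinement sequence of the updated bisplitter onto that copy.

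First I would identify the end structures of $\calB_k$. A state $\sigma$ has only self-loop transitions iff $\sigma[i{+}1]=\bit0$ for every $1\leq i<k$, which picks out exactly $\bit{0}^k$ and $\bit{1}\bit{0}^{k-1}$; so $\{\bit{0}^k\}$ and $\{\bit{1}\bit{0}^{k-1}\}$ are end structures. A short induction, firing $a_i$ for the largest $i$ with $\sigma[i{+}1]=\bit1$ and thereby strictly decreasing the index of the rightmost $\bit1$ past position~$1$, shows every other state reaches one of these two, so $\mathit{ES}(\calB_k)=\{\bit{0}^k,\bit{1}\bit{0}^{k-1}\}$. Because Theorem~\ref{thm:inherent_complexity} forces bisimilarity on $\calB_k$ to be the identity, the updated initial partition is
\[
\pi_0' = \bigl\{\{\bit{0}^k\},\ \{\bit{1}\bit{0}^{k-1}\},\ B_{\bit0}\!\setminus\!\{\bit{0}^k\},\ B_{\bit1}\!\setminus\!\{\bit{1}\bit{0}^{k-1}\}\bigr\}.
\]
I would then set $S'=\bit{01}\Bit^{k-2}$, with the bijection $\iota\colon\Bit^{k-2}\to S'$, $\sigma\mapsto\bit{01}\sigma$. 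A direct inspection of the transition definition shows that on $S'$ the actions $a_3,\dots,a_{k-1}$ preserve $S'$ and under $\iota$ mirror precisely the actions $a_1,\dots,a_{k-3}$ of $\calB_{k-2}$; under $a_1$ every state of $S'$ goes to the singleton $\{\bit{1}\bit{0}^{k-1}\}$, and under $a_2$ the state $\bit{01}\sigma$ either self-loops (if $\sigma[1]=\bit0$) or jumps to $\{\bit{0}^k\}$ (if $\sigma[1]=\bit1$). Note $S'\subseteq B_{\bit0}\!\setminus\!\{\bit{0}^k\}$, so $\pi_0'$ assigns all of $S'$ to a single block.

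Given a valid refinement sequence $\Pi=(\pi_0',\pi_1,\dots,\pi_n)$ for $\calB_k'$, I would project via $\pi|_{S'}=\{B\cap S'\mid B\in\pi,\ B\cap S'\neq\emptyset\}$. A short case analysis comparing which child block of a split is largest with which has the largest intersection with $S'$ gives the per-step cost inequality $\IRC(\pi,\pi')\geq\IRC(\pi|_{S'},\pi'|_{S'})$, and summing yields $\IRC(\Pi)\geq\IRC(\Pi|_{S'})$. Without loss of generality I would also decompose multi-split steps into single-split steps, which leaves both the validity of $\Pi$ and its total cost unchanged, so each projected step either leaves $\pi|_{S'}$ unchanged or performs exactly one split of a block of $\pi|_{S'}$.

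The main obstacle is to extract from $\Pi|_{S'}$ a genuine valid refinement sequence of $\calB_{k-2}$. Using the uniformity properties of $a_1$ and $a_2$ on $S'$, the only split of a block of $\pi|_{S'}$ that can be driven by these two actions is the dichotomy separating $\bit{01}\bit{0}\Bit^{k-3}$ from $\bit{01}\bit{1}\Bit^{k-3}$ (induced by the splitter $\{\bit{0}^k\}$ together with $a_2$); once this dichotomy is in place, $a_1$ cannot further separate anything inside $S'$ and $a_2$ maps each remaining half uniformly to a single block (either itself or $\{\bit{0}^k\}$). Every other split of a block of $\pi|_{S'}$ must thus be witnessed by some action $a_j$ with $j\geq 3$ and a splitter $B^*$ in the current partition of $\calB_k$, and since for such $j$ the $a_j$-targets of states in $S'$ lie in $S'$, $B^*\cap S'$ is nonempty and $\iota^{-1}(B^*\cap S')$ is a block of the projected partition — that is, a valid $\calB_{k-2}$-splitter under the action renaming $a_{j+2}\mapsto a_j$. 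Hence the tail of $\Pi|_{S'}$ following the dichotomy step translates through $\iota^{-1}$ into a valid refinement sequence of $\calB_{k-2}$ starting from $\pi_0^{k-2}$ and ending in singletons of $\Bit^{k-2}$. Applying Theorem~\ref{thm:inherent_complexity} to this tail gives cost at least $\IRC(\calB_{k-2})$, and combining with the cost inequality above yields $\IRC(\Pi)\geq\IRC(\calB_{k-2})$.
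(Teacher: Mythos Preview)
Your proposal is correct and follows essentially the same strategy as the paper: project a valid refinement sequence of the oracle-updated $\calB_k$ onto a sub-cube isomorphic to~$\calB_{k-2}$, show the first nontrivial projected step yields exactly~$\pi_0^{k-2}$, and verify that all later projected splits are witnessed by actions $a_j$ with $j\geqslant 3$, hence translate to valid splits in~$\calB_{k-2}$. The only noteworthy difference is cosmetic: you embed via $\sigma\mapsto\bit{01}\sigma$ whereas the paper uses $\sigma\mapsto\bit{11}\sigma$; the two choices are symmetric (in your copy $a_1$ sends everything to $\{\bit{10}^{k-1}\}$ and $a_2$ separates by the first bit of~$\sigma$ via the target $\{\bit{0}^k\}$, in the paper's copy the roles of these two singletons are swapped). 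Two small remarks: your decomposition into single binary splits does preserve total cost provided you split off the pieces in a suitable order, and the final inequality follows from the definition of $\IRC(\calB_{k-2})$ as a minimum rather than from Theorem~\ref{thm:inherent_complexity} itself.
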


\begin{proof}
  Observe that there are only two end structures in~$\calB_k$, viz.\
  the singletons of the two states $\bit0^k$
  and~$\bit1 \bit0^{k{-}1}$. Since all other states can reach
  $\bit0^k$ or~$\bit1 \bit0^{k{-}1}$, these states are not in an end
  structure:
  Choose $\sigma \in \Bit^k$,
  $\sigma \neq \bit0^k, \bit1 \bit0^{k{-}1}$.
  Then $\sigma$ is of the form $b \mkern1.5mu \bit0^j \bit1 \theta$ for
  some $b \in \Bit$, $j \geqslant 0$ and~$\theta \in \Bit^\ast$. For
  $j = 0$ we have
  $\sigma \pijl{a_1} \overline{b} \mkern1.5mu \bit0^{k{-}1}$ which is
  either $\bit0^k$ or $\bit1 \bit0^{k{-}1}$; for $j > 0$ we have
  $\sigma \pijl{a_{j{+}1}} b \mkern1mu \bit0^{j-1} \bit1
  \bit0^{k{-}(j{+}1)}$ while
  $b \mkern1.5mu \bit0^{j-1} \bit1 \bit0^{k{-}(j{+}1)}$ reaches
  $\bit0^k$ or $\bit1 \bit0^{k{-}1}$ by induction.  

  By Lemma~\ref{thm:facts}, every state~$\sigma \in \Bit^k$
  of~$\calB_k$ has its own bisimulation equivalence
  class~$\{ \sigma \}$. It follows that the updated initial
  partition~$\pi'_0$ consists of the blocks $\{ \bit{0}^k \}$,
  $\{ \bit{10}^{k{-}1} \}$, $B_\bit{0}^k = B_\bit{0} \backslash \{ \bit{0}^k \}$,
  and $B_\bit{1}^k = B_\bit{1} \backslash \{ \bit{10}^{k{-}1} \}$.
%		
  % We claim that if a sequence
  % $\Pi = ( \pi_0', \pi_1, \ldots , \pi_n )$ for~$\calB_k$ exists with
  % costs $\IRC(\Pi) < \IRC(\calB_{k{-}2})$, then also a valid
  % refinement sequence~$\Pi'$ for~$\calB_{k{-}2}$ exists with costs
  % smaller than~$\IRC(\calB_{k{-}2})$ which yields a contradiction,
  % since, by definition, $\IRC(\calB_{k{-}2})$ are the minimum costs over
  % all valid refinement sequence for~$\calB_{k{-}2}$.
%
  Now, assume $\Pi = ( \pi_0', \pi_1, \ldots \pi_n )$ is a valid
  refinement sequence for~$\calB_{k}$. We show that
  $\IRC(\Pi) \geqslant \IRC(\calB_{k{-}2})$ by constructing a valid
  refinement sequence~$\Pi'$ for~$\calB_{k{-}2}$ satisfying
  $\IRC(\Pi) \geqslant \IRC(\Pi') \geqslant \IRC(\calB_{k{-}2})$.

  To construct $\Pi'$ from~$\Pi$, we use the partial projection
  function~$p : \Bit^k \rightharpoonup \Bit^{k{-}2}$ that
  removes the prefix~$\bit{11}$ from a bitstring and is undefined
  if~$\bit{11}$ is not a prefix.  That means
  $p(\bit{11}\sigma) = \sigma$ for all $\sigma\in \Bit^{k{-}2}$ and
  $p(\sigma')$ is undefined for
    $\sigma'\not\in \bit{11} \Bit^{k{-}2}$. 
  A partition~$\pi$ of~$\Bit^k$ is projected to a partition
  of~$\Bit^{k{-}2}$ by projecting all the blocks of~$\pi$
  and ignoring empty results, thus
  \begin{displaymath}
    p(\pi) = \lc p[B] \mid B\in \pi \rc \setminus \emptyset \mkern1mu .
  \end{displaymath}
  In particular,
  $p(\pi_0') = \{ \mkern1mu \lc \sigma \mid \sigma \in \Bit^{k{-}2}
  \rc \mkern1mu \}$, i.e.\ the unit partition
  of~$\Bit_{k{-}2}$ consisting of the prefix
  block~$B_\varepsilon$ only.
  Second, we remove repeated partitions from the sequence
  $( \mkern1mu p(\pi_0'), p(\pi_1), \ldots , p(\pi_n) )$ 
  to obtain a subsequence~$\Pi'$,
  say $\Pi' = (\varrho_0, \varrho_1, \ldots, \varrho_\ell)$.
  Thus, for some order preserving surjection
  $q : \{ 1,...,n \} \to \{ 1,...,\ell \}$ it holds that
  $p(\pi_i) = p(\pi_{i'})$ iff $q(i) = q(i')$, and
  $\varrho_j = p(\pi_i)$ if $q(i) = j$ for
    $1 \leqslant i \leqslant n$, $1 \leqslant j \leqslant \ell$.

  We have $\varrho_0 = p(\pi'_0) = \{ B_\varepsilon \}$. Next we claim
  that
  $\varrho_1 = \pi^{k{-}2}_0 = \{\bit{0}\Bit^{k-3},
  \bit{1}\Bit^{k-3}\}$ the initial partition of~$\calB_{k{-}2}$,
  containing the prefix blocks of $\bit0$ and~$\bit1$
  of~$\Bit^{k{-}2}$: Suppose to the contrary that
  $b \mkern1mu \theta, b \mkern1mu \theta' \in \Bit^{k{-}2}$, for a
  bit~$b \in \Bit$ and strings $\theta, \theta' \in \Bit^{k{-}3}$, are
  two different states which are not in the same block of~$\varrho_1$.
  Let~$i$, $0 \leqslant i < n$ be such that $p(\pi_i) = \varrho_0$ and
  $p(\pi_{i{+}1}) = \varrho_1$. Then
  $\bit{11} \mkern1mu b \mkern1mu \theta$
  and~$\bit{11} \mkern1mu b \mkern1mu \theta'$ have been separated
  when refining~$\pi_i$ into~$\pi_{i{+}1}$. But no action~$a_j$
  witnesses such a split:
  (i)~$\calB_k(\bit{11} \mkern1mu b \mkern1mu \theta, a_1) =
  \calB_k(\bit{11} \mkern1mu b \mkern1mu \theta', a_1)$ as both
  equal~$\bit0^k$;
  (ii)~$\calB_k(\bit{110} \mkern1mu \theta, a_2) = \bit{110} \mkern1mu
  \theta \in B_\bit{1}^k$, and
  $\calB_k(\bit{110} \mkern1mu \theta', a_2) = \bit{110} \mkern1mu
  \theta' \in B_{\bit1}^k$;
  (iii)~$\calB_k(\bit{111} \mkern1mu \theta, a_2) = \calB_k(\bit{111}
  \mkern1mu \theta', a_2) = 10^{k{-}1} \in B_{\bit1}^k$;
  %% EV deleted: viz.\ are equal to~$10^{k{-}1}$}$};
  (iv)~for $j > 2$ it holds that $\calB_k(\bit{11} \mkern1mu b
  \mkern1mu \theta , a_j), \calB_k(\bit{11} \mkern1mu b \mkern1mu
  \theta', a_j) \in B_1^k$. Since $\varrho_1 \neq
  \varrho_0$,
  $\varrho_1$ has at least two blocks. Hence, these must be
  $\bit0\Bit^{k-3}$ and~$\bit{1}\Bit^{k-3}$.  Thus $\varrho_1 = \{
  \bit0\Bit^{k-3}, \bit{1}\Bit^{k-3} \}$ as claimed.
	
  Next we prove that every refinement of~$\varrho_i$
  into~$\varrho_{i{+}1}$ of~$\Pi'$, for~$i$, $1 \leqslant i < \ell$,
  is valid for~$\calB_{k{-}2}$. We first observe that, for all
  $\sigma, \sigma' \in \Bit^{k{-}2}$, $a_j \in \Act$, it holds that
  $\calB_{k{-}2}(\sigma,a_j) = \sigma'$ iff
  $\calB_k(\bit{11} \mkern1mu \sigma, a_{j{+}2}) = \bit{11} \mkern1mu
  \sigma'$. This is a direct consequence of the definition of the
  transition functions of $\calB_{k{-}2}$ and~$\calB_k$. From this we
  obtain
  \begin{equation}  
    \sigma =_{\varrho_i} \sigma' \iff \bit{11} \mkern1mu
    \sigma =_{\pi_h} \bit{11} \mkern1mu \sigma' 
      \label{eq-rhoi-pih}
  \end{equation}
  provided $\varrho_i = p(\pi_h)$, for~$0 \leqslant i \leqslant
  \ell$ and a suitable choice of
  $h$, via the definition of the projection
  function~$p$. Now, consider the subsequent partitions
  $\varrho_i$ and~$\varrho_{i+1}$ in~$\Pi'$, $1 \leqslant i \leqslant
  \ell$. Now, let~$h$, $0 \leqslant h <
  n$, be such that $\varrho_i = p(\pi_h)$ and $\varrho_{i{+}1} =
  p(\pi_{h{+}1})$. Clearly,
  $\varrho_{i{+}1}$~is a refinement of~$\varrho_i$; if for~$B \in
  \pi_{h{+}1}$ we have $B = \bigcup_{\alpha {\in} I} \:
  B_\alpha$ with $B_\alpha \in \pi_h$ for~$\alpha \in
  I$, then for~$p[B] \in \varrho_{i{+}1}$ we have $p[B] =
  \bigcup_{\alpha {\in} I} \: p[B_\alpha]$ with $p[B_\alpha] \in
  \varrho_i$ for~$\alpha \in
  I$. The validity of the refinement
  of~$\varrho_i$
  into~$\varrho_{i{+}1}$ is justified by the validity
  of~$\pi_{h}$ into~$\pi_{h{+}1}$. If $\sigma =_{\varrho_i}
  \sigma'$ and $\sigma \neq_{\varrho_{i{+}1}} \sigma'$ for $\sigma,
  \sigma' \in \Bit^{k{-}2}$, then $\sigma, \sigma' \in
  \bit0\Bit^{k-3}$ or $\sigma, \sigma' \in
  \bit1\Bit^{k-3}$ since $\varrho_i$ is a refinement
  of~$\varrho_0$. Moreover, $\bit{11} \mkern1mu \sigma =_{\pi_h}
  \bit{11} \mkern1mu \sigma'$ and $\bit{11} \mkern1mu \sigma
  \neq_{\pi_{h{+}1}}11\sigma'$ by~(\ref{eq-rhoi-pih}). Hence, by
  validity, $\calB_k(\bit{11} \mkern1mu \sigma,a_j) \neq_{\pi_h}
  \calB_k(\bit{11} \mkern1mu \sigma',a_j)$ for some $a_j \in
    \calA$.  Clearly $j \neq 1$, since $(\bit{11}\sigma)[2] =
    (\bit{11}\sigma')[2] = \bit{1}$.  Also,~$j \neq
  2$, since $\sigma[1] = \sigma'[1]$ we have $(\bit{11} \mkern1mu
  \sigma)[3] = (\bit{11} \mkern1mu
  \sigma')[3]$. Therefore, $\calB_{k{-}2}(\sigma,a_{j{-}2})
  \neq_{\varrho_i}
  \calB_{k{-}2}(\sigma',a_{j{-}2})$, showing the refinement
  of~$\varrho_i$ into~$\varrho_{i{+}1}$ to be valid.
  
  Finally, since every block in~$\pi_n$ is a singleton, this is also
  the case for~$\varrho_\ell$. Thus, $\varrho_\ell$ is indeed the
  coarsest stable partition for~$\calB_{k{-}2}$ as required for~$\Pi'$
  to be a valid refinement sequence for~$\calB_{k{-}2}$. Every
  refinement of~$\varrho_i$ into~$\varrho_{i{+}1}$ of~$\Pi'$ is
  projected from a refinement of some~$\pi_h$ into~$\pi_{h{+}1}$
  of~$\Pi$ as argued above.
  Therefore, since $p(\pi_h) = \varrho_i$
  and $p(\pi_{h{+}1}) = \varrho_{i{+}1}$, we have
  $\IRC(\pi_h,
  \pi_{h{+}1}) \geqslant \IRC(\varrho_i, \varrho_{i{+}1})$, and hence
  $\IRC(\Pi) = \sum_{h{=}1}^{n} \, \IRC(\pi_{h{-}1},\pi_h) \geqslant
  \sum_{i{=}1}^\ell \, \IRC(\varrho_{i{-}1},\varrho_i) = \IRC(\Pi')$.
  Since, by definition, $\IRC(\calB_{k{-}2})$ is the minimum over all
  valid refinement sequences for~$\calB_{k{-}2}$ it holds that
  $\IRC(\Pi') 
  \geqslant \IRC(\calB_{k{-}2})$. Therefore, $\IRC(\Pi) \geqslant
  \IRC(\Pi') \geqslant \IRC(\calB_{k{-}2})$ as was to be shown.
\end{proof}

\noindent
Next we combine the above lemma with the lowerbound provided by
Theorem~\ref{thm:inherent_complexity} in order to prove the main
result of this section.

\begin{thm}
  Any partition refinement algorithm with an end structure oracle to
  decide bisimilarity for a deterministic LTS is $\Omega(n\log n)$.
\end{thm}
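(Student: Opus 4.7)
The plan is to use the bisplitter family $\calB_k$ itself as the witness, and chain together the two prior results: Lemma~\ref{lemma:oracle_projection} reduces an oracle-enhanced run on $\calB_k$ to a plain partition-refinement run on $\calB_{k{-}2}$, and Theorem~\ref{thm:inherent_complexity} already supplies an $\Omega(n\log n)$ bound in that oracle-free setting. Since $\calB_k$ is deterministic, it is a legitimate instance for the theorem.

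First, I would fix any partition refinement algorithm $\calA$ equipped with an end structure oracle, and run it on $\calB_k$ for arbitrary $k>2$. By definition, the run produces a valid refinement sequence $\Pi = (\pi_0', \pi_1, \ldots, \pi_n)$ starting from the updated initial partition $\pi_0'$ of $\calB_k$. Lemma~\ref{lemma:oracle_projection} then gives
\begin{displaymath}
  \IRC(\Pi) \;\geqslant\; \IRC(\calB_{k{-}2}) \mkern1mu .
\end{displaymath}

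Second, Theorem~\ref{thm:inherent_complexity} (applied to $\calB_{k{-}2}$, which has $n' = 2^{k{-}2}$ states) shows that $\IRC(\calB_{k{-}2}) \geqslant \tfrac{1}{2} n' \log \tfrac{1}{2} n'$. Writing $n = 2^k$ for the number of states of $\calB_k$, we have $n' = n/4$ and $\log n' = \log n - 2$, so
\begin{displaymath}
  \IRC(\Pi) \;\geqslant\; \tfrac{1}{2}\cdot\tfrac{n}{4}\cdot(\log n - 3) \;\in\; \Omega(n \log n) \mkern1mu .
\end{displaymath}
Since $\calA$ and $\Pi$ were arbitrary, every partition refinement algorithm with an end structure oracle incurs $\Omega(n\log n)$ refinement cost on $\calB_k$, which establishes the claim.

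I do not expect any real obstacle: the heavy lifting has been done in Lemma~\ref{lemma:oracle_projection} (the projection argument that collapses the two singleton end structures $\bit0^k$ and $\bit1\bit0^{k{-}1}$ and strips the prefix $\bit{11}$) and in Theorem~\ref{thm:inherent_complexity}. The only minor care is to confirm that $\calB_k$ is an instance of a deterministic LTS as stipulated by the theorem, and to carry the constants through the substitution $n' = n/4$ so that the $\Omega(n\log n)$ conclusion for $\calB_{k-2}$ translates verbatim into $\Omega(n\log n)$ for $\calB_k$.
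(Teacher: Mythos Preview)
Your proposal is correct and follows essentially the same approach as the paper's own proof: invoke Lemma~\ref{lemma:oracle_projection} to bound the oracle-enhanced cost on~$\calB_k$ from below by~$\IRC(\calB_{k{-}2})$, then apply Theorem~\ref{thm:inherent_complexity} and translate the constant via $n' = n/4$. The paper writes the final bound as $\frac{1}{8}n\log\frac{1}{8}n$, which is the same expression as your $\frac{1}{8}n(\log n - 3)$.
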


\begin{proof}
  \label{thm-part-ref-with-oracle}
  Let $\calB'_k$ be the updated bisplitter with the initial
  partition~$\pi'_0$ containing $\{ \bit0^k \}$,
  $B_{\bit0} \backslash \{ \bit0^k \}$, $\{ \bit{10}^{k{-}1} \}$, and
  $B_{\bit1} \backslash \{ \bit{10}^{k{-}1} \}$ as given by the oracle
  for end structures rather than the partition~$\pi_0$ containing
  $B_{\bit0}$ and~$B_{\bit1}$.  By
  Lemma~\ref{lemma:oracle_projection} we have, for $k > 2$, that
  $\IRC(\calB'_k) \geqslant \IRC(\calB_{k{-}2})$. By
  Theorem~\ref{thm:inherent_complexity} we know that
  $\IRC(\calB_{k{-}2}) \geqslant \frac{1}{2} \tilde{n} \log \frac12
  \tilde{n}$ for $\tilde{n}= 2^{k{-}2}$, the number of states
  of~$\calB_{k{-}2}$. It holds that
  $\tilde{n} = \frac{2^{k{-}2}}{2^{k}} \mkern1mu n = \frac{1}{4} n$. So
  $\IRC(\calB'_k) \geqslant \frac{1}{8} n \log \frac{1}{8} n$ from
  which we conclude that deciding bisimilarity for~$\calB_k$ with the
  help of an oracle for the end structures is $\Omega(n \log n)$.
\end{proof}

% !TEX root =  main.tex
%% file ck-is-n-plus-m-log-n.tex

\section{$\calC_k$ is $\Omega((m+n)\log n)$ for partition refinement}
\label{sec:two-action-action-set}

We modify the bisplitter~$\calB_k$, that has an action alphabet of
$k{-}1$~actions, to obtain a deterministic LTS with two actions
only. The resulting LTS~$\calC_k$ has the action alphabet
$\{ a,b \}$, for each~$k > 1$, and is referred to as the $k$-th
\emph{layered bisplitter}. We use~$\calC_k$ to obtain an
$\Omega((n+m) \log n)$ lowerbound for deciding bisimilarity for LTSs
with only two actions, where $n$~is the number of states and $m$~is the
number of transitions.

In order to establish the lowerbound we adapt the construction
of~$\calB_k$ at two places. We introduce for each
$\sigma \in \Bit^k$, a stake of $2^k$~states.
Moreover, to each stake we add a tree gadget. These gadgets have height
$\lceil {\log (\frac{k-1}{2})} \rceil$ to accommodate
$\lceil (k{-}1)/2 \rceil$~leaves in order to encode the action alphabet~$\Act_k$
	of~$\calB_k$ with $k{-}1$~actions.

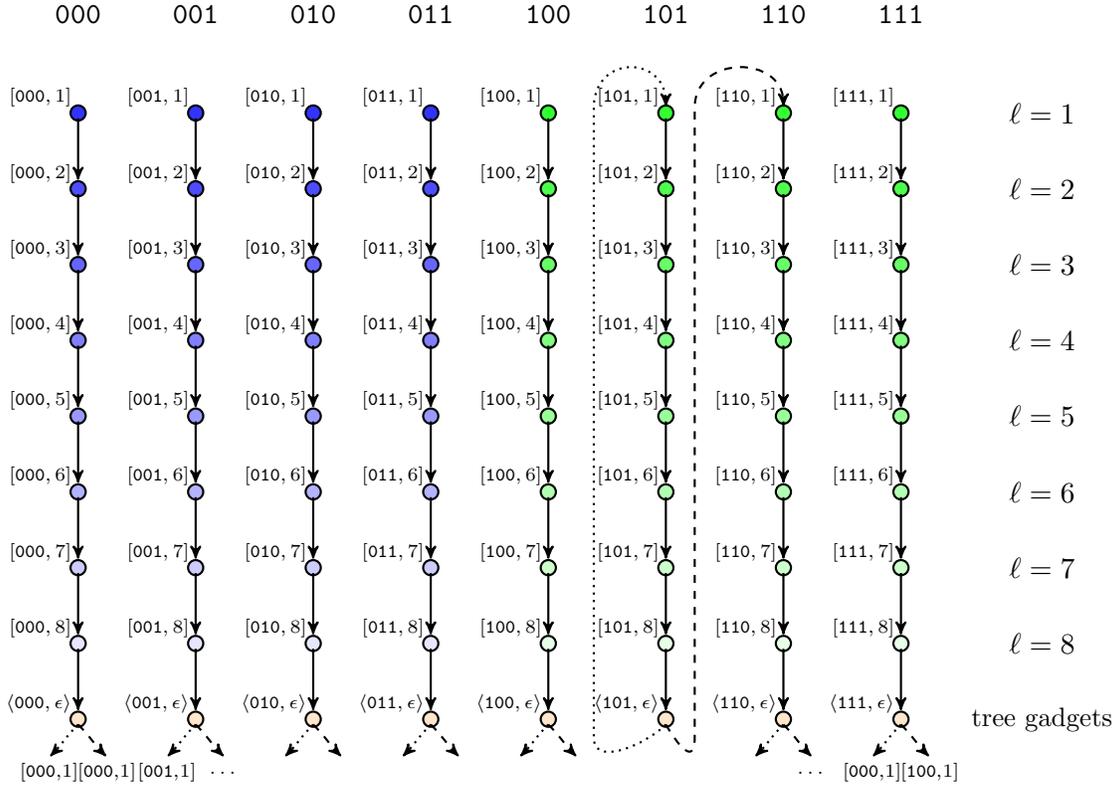
\begin{figure}[t]
  \begin{center}
    \resizebox{\textwidth}{!}{
      %% file drawing-inflated-3-bisplitter.tex

\begin{tikzpicture}[-]
  % And here the diagram for K=3 is drawn.
  % Grid of states
  \def\colwidth{1.55}
  \foreach \y / \label in {0/ \bit{000}, 1/\bit{001}, 2/\bit{010},
    3/\bit{011}} { 
    \foreach \x  in {1, ..., 8} {
      \pgfmathsetmacro\result{(9 - \x)}
      \pgfmathsetmacro\adjusty{(\y * \colwidth)}
      \draw[thick, fill=blue!\x0, font=\tiny] (\adjusty,\x)
      circle(0.1cm) node[above left]
      {$[{\label},\pgfmathprintnumber{\result}]$}; 
    }
  }
	
  \foreach \y / \label in {4/\bit{100}, 5/\bit{101}, 6/\bit{110}, 7/\bit{111}} {
    \foreach \x in {8,...,1} {
      \pgfmathsetmacro\result{(9 - \x)}
      \pgfmathsetmacro\adjusty{(\y * \colwidth)}
      \draw[thick,fill=green!\x0, font=\tiny] (\adjusty,\x)
      circle(0.1cm) node[above left]
      {$[\label,\pgfmathprintnumber{\result}]$}; 
    }
  }

  % Tree gadgets	
  \foreach \y / \label in {
    0/\bit{000},
    1/\bit{001},
    2/\bit{010},
    3/\bit{011},
    4/\bit{100},
    5/\bit{101},
    6/\bit{110},
    7/\bit{111}} {
    \pgfmathsetmacro\x{(0)};
    \pgfmathsetmacro\adjusty{(\y * \colwidth)};
    \draw[thick,fill=orange!20, font=\tiny] (\adjusty,\x)
    circle(0.1cm) node[above left]
    {$\langle{\label},\epsilon\rangle$}; 
  }
	
  % Bit numbers: 
  \def\indexheight{9.3}
  \foreach \y / \label in {
    0/\bit{000},
    1/\bit{001},
    2/\bit{010},
    3/\bit{011},
    4/\bit{100},
    5/\bit{101},
    6/\bit{110},
    7/\bit{111}} {
    \pgfmathsetmacro\adjusty{(\y * \colwidth)};
    \draw (\adjusty,\indexheight) node {$\label$};
  }
		
  % \draw (1,\indexheight) node {$\bit{001}$};
  % \draw (2,\indexheight) node {$\bit{010}$};
  % \draw (3,\indexheight) node {$\bit{011}$};
  % \draw (4,\indexheight) node {$\bit{100}$};
  % \draw (5,\indexheight) node {$\bit{101}$};
  % \draw (6,\indexheight) node {$\bit{110}$};
  % \draw (7,\indexheight) node {$\bit{111}$};
		
  % START TRANSITIONS
  % Transitions (levels)
  \foreach \col in {0,...,7} {
    \foreach \row in {7,...,0} {
      \pgfmathsetmacro\adjusty{(\col * \colwidth)}
      \draw[thick, ->](\adjusty,\row.93) to [out=-90,in=90,
      distance=0cm ] (\adjusty,\row.1); 
    }
  }

  % Cols	
  \pgfmathsetmacro\0{(0 * \colwidth)};
  \pgfmathsetmacro\1{(1 * \colwidth)};
  \pgfmathsetmacro\2{(2 * \colwidth)};
  \pgfmathsetmacro\3{(3 * \colwidth)};
  \pgfmathsetmacro\3{(3 * \colwidth)};
  \pgfmathsetmacro\4{(4 * \colwidth)};
  \pgfmathsetmacro\5{(5 * \colwidth)};
  \pgfmathsetmacro\6{(6 * \colwidth)};
  \pgfmathsetmacro\7{(7 * \colwidth)};
  \pgfmathsetmacro\diff{(0.38)};
  \pgfmathsetmacro\leafheight{(-0.5)};
  \pgfmathsetmacro\leafrangle{()};
  \pgfmathsetmacro\leaflangle{()};
		
  % k-values at the right
  \pgfmathsetmacro\kwidth{(8.2 * \colwidth)}
  \foreach \k in {8,...,1} {
    \pgfmathsetmacro\result{9 - \k};
    \draw [-] (\kwidth,\k) node {$\ell=\pgfmathprintnumber{\result}$};
  }	
  \draw [-] (\kwidth,0) node {{\small tree gadgets}};

  % a,b 000 -> 000
  \draw[thick, dotted, ->](\0,-0.07) to [out=225,in=45]
  ({\0-\diff},\leafheight); 
  \draw[thick, dashed, ->](\0,-0.07) to [out=315,in=135]
  ({\0+\diff},\leafheight);		 
  \draw (\0-\diff,\leafheight) node[below,font=\tiny] {$[\bit{000},\!1]$};
  \draw (\0+\diff,\leafheight) node[below,font=\tiny] {$[\bit{000},\!1]$};
		
  % a 001 -> 001, b 001-> 010
  \draw[thick, dotted, ->](\1,-0.07) to [out=225,in=45]
  ({\1-\diff},\leafheight); 
  \draw[thick, dashed, ->](\1,-0.07) to [out=315,in=135]
  ({\1+\diff},\leafheight); 
  \draw (\1-\diff,\leafheight) node[below,font=\tiny] {$[\bit{001},\!1]$};
  \draw (\1+\diff,\leafheight) node[below=3pt,font=\tiny] {$\dots$};

  % 010 a-> 100, b -> 010
  \draw[thick, dotted, ->](\2,-0.07) to [out=225,in=45]
  ({\2-\diff},\leafheight); 
  \draw[thick, dashed, ->](\2,-0.07) to [out=315,in=135]
  ({\2+\diff},\leafheight); 
  % \draw (\2-\diff,\leafheight) node[below,font=\tiny] {$[\bit{001},1]$};
  % \draw (\2+\diff,\leafheight) node[below,font=\tiny] {$[\bit{010},1]$};
		
  % 011 a -> 100, b -> 000
  \draw[thick, dotted, ->](\3,-0.07) to [out=225,in=45]
  ({\3-\diff},\leafheight); 
  \draw[thick, dashed, ->](\3,-0.07) to [out=315,in=135]
  ({\3+\diff},\leafheight); 
  % \draw (\3-\diff,\leafheight) node[below,font=\tiny] {$[\bit{100},1]$};
  % \draw (\3+\diff,\leafheight) node[below,font=\tiny] {$[\bit{000},1]$};
		
  % 100 a->100 b->100
  \draw[thick, dotted, ->](\4,-0.07) to [out=225,in=45]
  ({\4-\diff},\leafheight); 
  \draw[thick, dashed, ->](\4,-0.07) to [out=315,in=135]
  ({\4+\diff},\leafheight); 
  % \draw (\4-\diff,\leafheight) node[below,font=\tiny] {$[\bit{100},1]$};
  % \draw (\4+\diff,\leafheight) node[below,font=\tiny] {$[\bit{100},1]$};
		
  % 101 a->101 b->110 only real transitions
  \draw[thick, dotted, ->](\5,-0.07) to [out=225,in=270]
  ({\5-2.5*\diff},\leafheight+0.25) to [] (\5-2.5*\diff,8.07) to
  [out=90, in=90,distance=20] (\5,8.07); 
		
  \draw[thick, dashed, ->](\5,-0.07) to [out=300,in=270,distance=10]
  ({\5+\diff},{\leafheight+0.25}) to [out=90, in=270] (\5+\diff,8.07)
  to[out=90, in=90, distance=20] (\6,8.07);

  % \draw (\5-\diff,\leafheight) node[below,font=\tiny] {$[\bit{101},1]$};
  % \draw (\5+\diff,\leafheight) node[below,font=\tiny] {$[\bit{110},1]$};
		
  % 110 a->000 b->110
  \draw[thick, dotted, ->](\6,-0.07) to [out=225,in=45]
  ({\6-\diff},\leafheight); 
  \draw[thick, dashed, ->](\6,-0.07) to [out=315,in=135]
  ({\6+\diff},\leafheight); 
  % \draw (\6-\diff,\leafheight) node[below,font=\tiny] {$[\bit{000},1]$};
  \draw (\6+\diff,\leafheight) node [below=3pt,font=\tiny] {$\dots$};
		
  % 111 a->000 b->100
  \draw[thick, dotted, ->](\7,-0.07) to [out=225,in=45]
  ({\7-\diff},\leafheight); 
  \draw[thick, dashed, ->](\7,-0.07) to [out=315,in=135]
  ({\7+\diff},\leafheight); 
  \draw (\7-\diff,\leafheight) node[below,font=\tiny]
  {$[\bit{000},\!1]$};
  \draw (\7+\diff,\leafheight) node[below,font=\tiny] {$[\bit{100},\!1]$};
		
%		\draw[thick, ->](\1,1.07) to [out=30,in=180, distance=1cm ] (1.9,8);
%	
%		\draw[thick, ->](\2,1.09) to [out=30,in=-90, distance=0.5cm ] (2.3,8.0) to [out=90,in=180, distance=0.8cm ] (4,8.5)
%		to [out=0,in=180, distance=0.7cm ] (4.9,8);
%		
%		
%		% a 001 -> 010
%			
%		% b 001 -> 001
%		\draw[thick, dashed, ->](\3,0.93) to [out=225,in=-45, distance=0.8cm ] (-0.5,0.7)
%		to [out=135,in=180, distance=0.8cm ]  (-0.07,7.93);
%		
%		
%		% a 010 -> 010.
%		\draw[thick, dashed, ->](\3,1.07) to [out=45,in=180, distance=0.8cm ] (3.7,8.4)
%		to [out=0,in=180, distance=0.7cm ]  (4.9,8);
%		% b 010 -> 100 
%		\draw[thick, dashed, ->](\3,1.07) to [out=45,in=180, distance=0.8cm ] (3.7,8.4)
%		to [out=0,in=180, distance=0.7cm ]  (4.9,8);
%		
%		% 101 -> 110

              \end{tikzpicture}
    }
  \end{center}
  \caption{The partial layered bisplitter
    $\calC_3$ with tree gadgets, the colours represent the initial partition.} 
  \label{fig:bisplitter3_8}
\end{figure}

\begin{defi}
  \label{def:bisplitter-twolabels}
  Let $k > 1$, $\calB_k$ be the $k$-th bisplitter, and
  $\ofA = \{a,b\}$ be a two-element action set. The deterministic
  LTS~$\calC_k = ( S^\calC_k , \ofA, {\rightarrow_{\calC}},
  \pi_0^\calC)$, over the action set~$\ofA \mkern1mu$,
  \begin{enumerate}[(a)]
  \item has the set of
  states~$S^\calC_k$ defined as \smallskip \\ \qquad
    \begin{math}
      \begin{array}[t]{rcl}
        S^\calC_k
        & =
        & \lc \bpair{\sigma}{\ell} \in \Bit^k \times
          \Nat \mid 1 \leqslant \ell \leqslant 2^k \rc \cup {}
          \smallskip \\
        & 
        & \lc \apair{\sigma}{w} \in \Bit^k \times \ofA^\ast \mid 0
          \leqslant |w| \leqslant \lceil \log (\frac{k-1}{2}) \rceil \rc 
          \mkern1mu ,
      \end{array}
    \end{math}
    \smallskip
  \item \label{item-trans-calC}
    has the transition relation $\rightarrow_\calC$ given by 
    \smallskip \\ \qquad
    \begin{math}
      \def\arraystretch{1.2}
      \begin{array}[t]{rclcl}
        \bpair{\sigma}{\ell}
        & \cpijl{\alpha}
        & \bpair{\sigma}{\ell+1}
        &
        & \text{for $\sigma \in \Bit^k$, $1 \leqslant \ell <
          2^k$, $\alpha \in \ofA$} \\
        \bpair{\sigma}{2^k}
        & \cpijl{\alpha}
        & \apair{\sigma}{\varepsilon}
        &
        & \text{for $\sigma \in \Bit^k$, $\alpha \in \ofA$} \\
        \apair{\sigma}{w}
        & \cpijl{\alpha}
        & \apair{\sigma}{w \mkern1mu \alpha}
        &
        & \text{for $\sigma \in \Bit^k$, $|w| < \lceil \log (\frac{k-1}{2})
        		\rceil$, $\alpha
          \in \ofA$} \\
        \apair{\sigma}{w}
        & \cpijl{\alpha}
        & \bpair{\sigma'}{1}
        &
        & \text{for $\sigma \in \Bit^k$, $|w| = \lceil \log (\frac{k-1}{2})
          \rceil$,
          $\lbl(w \mkern1mu \alpha) = j$,}\\
      & & & &\text{and, $\calB_k( \sigma,a_{j}) = \sigma'$,
      	  $\alpha \in \ofA$}, \\
      \end{array}
      \def\arraystretch{1.0}
    \end{math} \\
  \item and has the initial partition~$\pi^\calC_0= \lc
  C^{\mkern1mu \ell}_{\bit0} , C^{\mkern1mu \ell}_{\bit1}
  \mid 1 \leqslant \ell \leqslant 2^k \rc \cup \lbrace C_\varepsilon \rbrace$ defined as
      \smallskip \\ \qquad
	  \begin{math}
	  \begin{array}[t]{rclcl}
	  C^{\mkern1mu \ell}_{\bit0}
	  & =
	  & \,
	  \lc \bpair{\sigma}{\ell} \mid
	  \sigma \in B_{\bit0} \rc 
	  &
	  &\text{for $1\leqslant l\leqslant 2^k$}\\
	  	  C^{\mkern1mu \ell}_{\bit1}
	  & =
	  & \,
	  \lc \bpair{\sigma}{\ell} \mid
	  \sigma \in B_{\bit1} \rc 
	  &
	  &\text{for $1\leqslant l\leqslant 2^k$}\\
	  C_\varepsilon 
	  & =
	  & \lc \apair{\sigma}{w} \in S^\calC_k \mid \sigma \in \Bit^k ,\, w
	  \in \ofA^\ast \rc.
	  &
	  &
	  \end{array}
	  \end{math}
  \end{enumerate}
  The auxiliary labelling function
  $\lbl : \ofA^{\leqslant \lceil\log (k{-}1)\rceil} \to \Nat$, used in
  item~\ref{item-trans-calC} is defined by
  $\lbl(w) = \min \{ \mathit{bin}(w) {+}1, k{-}1 \}$. Here
  $\mathit{bin} : \ofA^* \to \Nat$ is the binary evaluation function
  defined by $\mathit{bin}(\varepsilon) = 0$,
  $\mathit{bin}(w \mkern1mu a) = 2 * \mathit{bin}(w)$, and
  $\mathit{bin}(w \mkern1mu b) = 2 * \mathit{bin}(w) {+}1$.
\end{defi}

\noindent
We see that with each string $\sigma \in \Bit^k$ we associate
in~$\calC_k$ as many as $2^k$~stake states
$\bpair{\sigma}{1}, \ldots, \bpair{\sigma}{2^k}$, one for each
level~$\ell$, $1 \leqslant \ell \leqslant 2^k$. The stake states are
traversed from the top~$\bpair{\sigma}{1}$ to
bottom~$\bpair{\sigma}{2^k}$ for each string~$\sigma$ over~$\ofA$ of
length~$2^k$. The tree gadget, with states $\bpair{\sigma}{w}$ for bit
sequences~$\sigma$ and strings~$w$ over~$\ofA$, consists of a complete
binary tree of height~$\lceil \log(\frac{k-1}{2}) \rceil$ that hence
has $\lceil(k-1)/2\rceil$~leaves. Traversal down
the tree takes a left child on action~$a$ from~$\ofA$, a right child
on action~$b$ from~$A$. Together with the two actions of~$\ofA$,
$k{-}1$~source-label pairs can be encoded, connecting the stake on top
of the tree gadgets $k{-}1$ times with other stakes. To simulate a
transition $\sigma \pijl{a_j} \sigma'$ of~$\calB_k$ in~$\calC_k$ from
a leaf of a tree gadget of~$\sigma$ to the top of the stake
of~$\sigma'$, we need to be at a leaf~$\apair{\sigma}{w}$ of the tree
gadget of~$\sigma$ such that the combined string $w \mkern1mu \alpha$
for~$\alpha \in \ofA$ is the binary encoding according to~$\lbl$ of
the index~$j$. An $\alpha$-transition thus leads from the
source~$\apair{\sigma}{w}$ to the target~$\bpair{\sigma'}{1}$ if
$\sigma \pijl{a_j} \sigma'$ in~$\calB_k$ and $w \mkern1mu \alpha$
corresponds to~$j$.
The
partition
\begin{displaymath}
  \pi^\calC_0 = \lc
  C^{\mkern1mu \ell}_{\bit0} , C^{\mkern1mu \ell}_{\bit1}
  \mid 1 \leqslant \ell \leqslant 2^k \rc
  \cup \lbrace C_\varepsilon \rbrace
\end{displaymath}
distinguishes, for each level~$\ell$, the states at level~$\ell$ of
the stakes of strings starting with~$\bit0$ in~$C^\ell_{\bit0}$, the
states of the stakes at level~$\ell$ of strings starting with~$\bit1$
in~$C^\ell_{\bit1}$, and the states of the tree gadgets collected
in~$C_\varepsilon$.

\blankline

\noindent
Figure~\ref{fig:bisplitter3_8} depicts the layered
$3$-splitter~$\calC_3$. Because also $\calB_3$ has an action set of
size~$2$ the tree gadgets only consist of the root node of the form
$\apair{\sigma}{\varepsilon}$. In Figure~\ref{fig:bisplitter12} we see
that for bisplitter~$\calB_3$ we have $\bit{101} \pijl{a_1} \bit{101}$
and $\bit{101} \pijl{a_2} \bit{110}$. In
Figure~\ref{fig:bisplitter3_8} we have transitions
$\apair{\bit{101}}{\varepsilon} \pijl{a} \bpair{\bit{101}}{1}$ and
$\apair{\bit{101}}{\varepsilon} \pijl{b} \bpair{\bit{110}}{1}$ (dotted
and dashed, respectively).
Colouring of nodes is used to represent the initial
partition~$\pi^\calC_0$ that contains $17$~blocks:
for each level~$\ell$, $1\leqslant \ell \leqslant 2^3$, 
$\pi^\calC_0$ contains a block holding the four states of the stakes in~$C^\ell_{\bit0}$ on the
left and a block with the four stake states in~$C^\ell_{\bit1}$ on the right, and 
lastly one block consisting of the eight tree states in~$C_{\varepsilon}$ at
the bottom of the picture.

The $6$-th bisplitter~$\calB_6$ has five actions, $a_1$ to~$a_5$. A
tree gadget for the layered bisplitter~$\calC_6$ with corresponding
outgoing transitions is drawn in Figure~\ref{fig:tree_c5}. The tree
has height
$\lceil\log ((6-1)/2)\rceil = \lceil\log \frac{5}{2}\rceil = 2$, hence
it has $2^2 = 4$~leaves. Since each leaf has two outgoing transitions,
one labelled~$a$ and one labelled~$b$, the two leftmost leaves
$\apair{\sigma}{aa}$ and~$\apair{\sigma}{ab}$ are used with the two
labels $a$ and~$b$ to simulate transitions for $a_1$ up to~$a_4$, the
two rightmost leaves $\apair{\sigma}{ba}$ and~$\apair{\sigma}{bb}$
have together four transitions all simulating the $a_5$-transition
of~$\sigma$.

\begin{figure}[t]
  \centering
  \resizebox{\textwidth}{!}{
  \begin{forest}
    [{$\apair{\bit{011010}}{\varepsilon}$}, name=node 1
    [{$\apair{\bit{011010}}{a}$}, l=11mm, edge label={node[midway,
      above]{$a$}} 		 	
    [{$\apair{\bit{011010}}{aa}$}, l=11mm, edge label={node[midway, above]{$a$}}
    [{$\bpair{\bit{100000}}{1}$}, l=13mm, edge label={node[midway,
      above left, yshift=-0.5mm]{$a$}}] 
    [{$\bpair{\bit{000000}}{1}$}, l=13mm, edge label={node[midway,
      above right, yshift=-0.5mm]{$b$}}] 
    ]
    [{$\apair{\bit{011010}}{ab}$}, l=11mm, edge label={node[midway, above]{$b$}}
    [{$\bpair{\bit{011010}}{1}$}, l=13mm, edge label={node[midway,
      above left, yshift=-0.5mm]{$a$}}] 
    [{$\bpair{\bit{011100}}{1}$}, l=13mm, edge label={node[midway,
      above right, yshift=-0.5mm]{$b$}}] 
    ]
    ]
    [{$\apair{\bit{011010}}{b}$}, edge label={node[midway, above]{$b$}} 
    [{$\apair{\bit{011010}}{ba}$}, edge label={node[midway, above]{$a$}}
    [{$\bpair{\bit{011010}}{1}$}, l=13mm, edge label={node[midway,
      above left, yshift=-0.5mm]{$a$}}] 
    [{$\bpair{\bit{011010}}{1}$}, l=13mm, edge label={node[midway,
      above right, yshift=-0.5mm]{$b$}}] 
    ]
    [{$\apair{\bit{011010}}{bb}$}, edge label={node[midway, above]{$b$}}
    [{$\bpair{\bit{011010}}{1}$},  l=13mm, edge label={node[midway,
      yshift=-0.5mm, above left]{$a$}}] 
    [{$\bpair{\bit{011010}}{1}$}, l=13mm, edge label={node[midway, above right,
      yshift=-0.5mm]{$b$}}]
    ]
    ]
    ]
  \end{forest}
  } %% resizebox
  \caption{The example of the outgoing tree for $\calC_6$ from the root
    $\bpair{\bit{011010}}{\varepsilon}\in S^{\calC}_6$\!.}
  \label{fig:tree_c5}
\end{figure}

\blankline

\noindent
The next lemma introduces three facts for the layered
bisplitter~$\calC_k$ that we need in the sequel. The first states that
if two states at different stakes, but at the same level, are
separated during partition refinement, then all corresponding states
at lower levels are separated as well.
The second fact helps to transfer witnessing transitions in~$\calB_k$
to the setting of~$\calC_k$. A transition $\sigma \pijl{a_j} \sigma'$
of~$\calB_k$ is reflected by a path from $\bpair{\sigma}{2^k}$ through
the tree gadget of~$\sigma$ from root to leaf and then to the top
state~$\bpair{\sigma'}{1}$ of the stake of~$\sigma'$. The
word~$w \mkern1mu \alpha$ encountered going down and out the tree
gadget corresponds to the action~$a_j$ according to the
$\lbl$-function.
Lastly, it is shown that
no two pairs of different states within the stakes are bisimilar.

\begin{lem}
  \label{lem:heights}
  Let $\Pi$ be a valid refinement sequence for~$\calC_k$ and $\pi$ a
  partition in~$\Pi$.
  \begin{enumerate}[(a)]
  \item \label{fact-4-1} If two states
    $\bpair{\sigma}{\ell}, \bpair{\sigma'}{\ell} \in S^\calC_k$,
    for~$1 \leqslant \ell \leqslant 2^k$, are in a different block
    of~$\pi$, then all pairs
    $\bpair{\sigma}{m}, \bpair{\sigma'}{m} \in S$, for all levels~$m$,
    $\ell \leqslant m \leqslant 2^k$, are in different blocks
    of~$\pi$.

    \smallskip
    
  \item \label{fact-4-2} If $\bpair{\sigma_1}{2^k}$
    and~$\bpair{\sigma_2}{2^k}$ are split for~$\pi$, then
    there are $w \in \ofA^\ast$, $\alpha \in \ofA$, and $\sigma'_1,
    \sigma'_2 \in \Bit^k$ such that
    \begin{displaymath}
      \bpair{\sigma_1}{2^k} \pijlsterc{w} \apair{\sigma_1}{w}
      \cpijl{\alpha} \bpair{\sigma_1'}{1}
      \quad \text{and} \quad
      \bpair{\sigma_2}{2^k} \pijlsterc{w} \apair{\sigma_2}{w}
      \cpijl{\alpha} \bpair{\sigma_2'}{1}
    \end{displaymath}
    with $\bpair{\sigma_1'}{1}$ and~$\bpair{\sigma_2'}{1}$ in different
    blocks of~$\pi$.

    \smallskip
    
  \item \label{fact-4-3} If $\pi$ is the last refinement in~$\Pi$, it
    contains the singletons of~$\bpair{\sigma}{\ell}$ for 
    $\sigma \in \Bit^k$ and~$1 \leqslant \ell \leqslant 2^k$.
\end{enumerate}
\end{lem}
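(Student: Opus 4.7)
My plan is to prove the three items by induction on the position $i$ of $\pi=\pi_i$ in the refinement sequence $\Pi$, exploiting the rigid determinism of $\calC_k$: every stake state $\bpair{\sigma}{\ell}$ with $\ell<2^k$ has the unique $\alpha$-successor $\bpair{\sigma}{\ell+1}$; every $\bpair{\sigma}{2^k}$ has the unique successor $\apair{\sigma}{\varepsilon}$; every internal tree node $\apair{\sigma}{w}$ has the unique $\alpha$-successor $\apair{\sigma}{w\alpha}$; and every leaf sends its $\alpha$-successor to the stake-top $\bpair{\calB_k(\sigma,a_{\lbl(w\alpha)})}{1}$. Consequently the validity criterion Definition~\ref{def:bis_valid}(b) forces any split introduced at the step $\pi_{i-1}\to\pi_i$ to be witnessed by a split in $\pi_{i-1}$ between the corresponding unique successors.

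For~\ref{fact-4-1}, the base case is immediate since $\bpair{\sigma}{\ell}\neq_{\pi_0^\calC}\bpair{\sigma'}{\ell}$ is equivalent to $\sigma[1]\neq\sigma'[1]$, a condition independent of $\ell$. In the inductive step, if the split is already present in $\pi_{i-1}$ the induction hypothesis and refinement close the case; otherwise validity and the unique successor $\bpair{\sigma}{\ell+1}$ yield $\bpair{\sigma}{\ell+1}\neq_{\pi_{i-1}}\bpair{\sigma'}{\ell+1}$, and the induction hypothesis applied at level $\ell+1$ cascades the disagreement to every $m\geq\ell+1$, which is inherited by $\pi_i$.

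For~\ref{fact-4-2}, I split on the first bits of $\sigma_1,\sigma_2$. When $\sigma_1[1]\neq\sigma_2[1]$ I exhibit an explicit witness: the $\calB_k$-action $a_2$ preserves the first bit in Definition~\ref{def:bisplitter} (both the self-loop and the bit-flipping sub-case preserve $\sigma[1]$), so taking $w=a^{\lceil\log((k-1)/2)\rceil}$ and $\alpha=b$ gives $\lbl(w\alpha)=2$ and hence $\apair{\sigma_j}{w}\cpijl{\alpha}\bpair{\calB_k(\sigma_j,a_2)}{1}$ with successors of opposite first bits, separated by $\pi_0^\calC$ and a fortiori by $\pi$. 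When $\sigma_1[1]=\sigma_2[1]$ I induct on $i$; in the non-trivial step, validity and the unique successor $\apair{\sigma_j}{\varepsilon}$ give $\apair{\sigma_1}{\varepsilon}\neq_{\pi_{i-1}}\apair{\sigma_2}{\varepsilon}$, and a sub-induction on tree depth, at each internal level invoking validity and the unique successor $\apair{\sigma}{w\alpha}$, descends to a leaf whose $\alpha$-successors $\bpair{\sigma'_j}{1}$ lie in different blocks of $\pi_{i-1}$, and therefore of $\pi$.

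For~\ref{fact-4-3}, I combine~\ref{fact-4-1} and~\ref{fact-4-2} with Lemma~\ref{thm:facts}. Since $\pi_n$ is stable it coincides with $\bis_{\calC_k}$, so it suffices to show that distinct stake states are never bisimilar; different levels are already separated by $\pi_0^\calC$, so consider a putative $\bpair{\sigma_1}{\ell}\bis_{\calC_k}\bpair{\sigma_2}{\ell}$ with $\sigma_1\neq\sigma_2$. Chasing the unique path through the stake and the tree gadget via determinism yields $\bpair{\calB_k(\sigma_1,a_j)}{1}\bis_{\calC_k}\bpair{\calB_k(\sigma_2,a_j)}{1}$ for every $a_j$, so the relation $R=\lc(\tau_1,\tau_2)\mid\bpair{\tau_1}{1}\bis_{\calC_k}\bpair{\tau_2}{1}\rc$ is a bisimulation on $\calB_k$ respecting $\pi_0^{k}$; yet Lemma~\ref{thm:facts} (as invoked in the proof of Theorem~\ref{thm:inherent_complexity}) shows bisimilarity on $\calB_k$ is the identity, forcing $\sigma_1=\sigma_2$, a contradiction. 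The main obstacle is the nested induction in the symmetric case of~\ref{fact-4-2}, which must juggle the induction on $i$ with a descending induction on tree depth while keeping the non-equalities in the correct partition $\pi$; a minor caveat is the degenerate regime with small $k$ in which the tree collapses to its root and the case distinction inside the tree becomes vacuous.
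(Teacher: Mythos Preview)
Your proposal is correct and follows essentially the same route as the paper: part~\ref{fact-4-1} by induction on the position in~$\Pi$ using determinism and validity (the paper phrases this as a minimal-counterexample argument, but the content is identical); part~\ref{fact-4-2} by a descending induction through the tree gadget, pushing each separation one level down via validity; and part~\ref{fact-4-3} by pulling back $\calC_k$-bisimilarity on stake states to a bisimulation on~$\calB_k$, which is the identity by Lemma~\ref{thm:facts}.

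The one difference worth recording concerns~\ref{fact-4-2}. The paper does \emph{not} case-split on $\sigma_1[1]$ versus~$\sigma_2[1]$: it argues uniformly, by reverse induction on~$|w|$, that any $\pi$-separation of $\apair{\sigma_1}{w}$ and~$\apair{\sigma_2}{w}$ forces a $\pi$-separation of some $\bpair{\sigma'_1}{1},\bpair{\sigma'_2}{1}$, and then reaches $\apair{\sigma_j}{\varepsilon}$ from $\bpair{\sigma_j}{2^k}$ by one more application of validity. Your explicit $a_2$-witness for the case $\sigma_1[1]\neq\sigma_2[1]$ is a genuine addition and in fact patches a point the paper leaves implicit, since that last reduction step tacitly relies on $\bpair{\sigma_1}{2^k} =_{\pi_0^\calC} \bpair{\sigma_2}{2^k}$ to invoke validity. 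In the only downstream use of~\ref{fact-4-2} (the proof of Lemma~\ref{lemma:inflate}) one is already in the equal-first-bit regime, so the paper's omission is harmless there; your version is simply more complete as a standalone statement. Your caveat about small~$k$ is apt: the $a_2$-witness requires $k\geqslant 3$.
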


\begin{proof}
  (a)~For a proof by contradiction, suppose the partition~$\pi$ is the
  first partition of~$\Pi$ that falsifies the statement of the lemma.
  So $\pi \neq \pi^\calC_0$, since for the initial
  partition~$\pi^\calC_0$ the statement holds. Thus,
  $\pi$ is a refinement of a partition~$\pi'$ in~$\Pi$. So, there are
  two states
  $\bpair{\sigma}{\ell}, \bpair{\sigma'}{\ell} \in S^\calC_k$ in
  different blocks of~$\pi$ while the states
  $\bpair{\sigma}{\ell{+}1}, \bpair{\sigma'}{\ell{+}1}$ are in the
  same block of~$\pi$ and hence of~$\pi'$. Since
  $\bpair{\sigma}{\ell}$ and~$\bpair{\sigma'}{\ell}$ only have
  transitions to $\bpair{\sigma}{\ell{+}1}$
  and~$\bpair{\sigma'}{\ell{+}1}$, respectively, that are in the same
  block~$\pi'$, the refinement would not have been valid. We conclude
  that no falsifying partition~$\pi$ in~$\Pi$ exists and that the
  lemma holds.

  (b)~We first prove, that for all $w\in \ofA^\ast$,
  $|w|\leqslant \lceil \log (k{-}1) \rceil - 1$, if
  $\apair{\sigma_1}{w}$ and~$\apair{\sigma_2}{w}$ are split in~$\pi$,
  then there are $v \in \ofA^\ast$ and $\alpha \in \ofA$ such that
  $\apair{\sigma_1}{w} \pijlster{v} \apair{\sigma_1}{wv} \pijl{\alpha}
  \bpair{\sigma'_1}{1}$ and
  $\apair{\sigma_2}{w} \pijlster{v} \apair{\sigma_2}{wv} \pijl{\alpha}
  \bpair{\sigma'_2}{1}$ with $\bpair{\sigma'_1}{1}$
  and~$\bpair{\sigma'_2}{1}$ in different blocks of~$\pi$. We prove
  this for all possible lengths $|w|$ by reverse induction. If~$w$ has
  maximal length, i.e.\ $|w| = \lceil \log(k{-}1) \rceil-1$ this is
  clear.  If $\apair{\sigma_1}{w}$ and~$\apair{\sigma_2}{w}$ are
  split, for $|w| < \lceil\log(k{-}1)\rceil-1$, then either
  $a$-transitions or $b$-transitions lead to split states. By the
  induction hypothesis, suitable paths exist from the targets of such
  transitions. Adding the respective transition proves the induction
  hypothesis. Since $\bpair{\sigma_1}{2^k}$
  and~$\bpair{\sigma_2}{2^k}$ can only reach
  $\apair{\sigma_1}{\varepsilon}$ and~$\apair{\sigma_2}{\varepsilon}$
  the statement follows.

  (c)~Choose~$\ell$, $1 \leqslant \ell \leqslant 2^k$ and define the
  relation $R \subseteq {S^\calB_k \times S^\calB_k}$ such that 
  $(\sigma_1, \sigma_2)\in R$ iff the stake states
  $\bpair{\sigma_1}{\ell}, \bpair{\sigma_2}{\ell}\in S^\calC_k$ are
  bisimilar for~$\calC_k$. We verify that $R$ is a bisimulation
  relation for~$\calB_k$. Note, that $R$ respects~$\pi^\calB_k$, the
  initial partition of~$\calB_k$. Now, suppose $(\sigma_1, \sigma_2)\in R$
  and
  $\sigma_1 \pijl{a_j} \sigma'_1$ for some $a_j \in \calA_k$
  and~$\sigma'_1 \in S^\calB_k$.
  By construction of~$\calC_k$ we have
  \begin{displaymath}
    \bpair{\sigma_1}{\ell} \pijlster{a^{2^k{-}\ell}}
    \bpair{\sigma_1}{2^k} \pijl{a} \apair{\sigma_1}{\varepsilon}
    \pijlster{w} \apair{\sigma_1}{w} \pijl{\alpha} \bpair{\sigma'_1}{1}
    \pijlster{a^{\ell{-}1}} \bpair{\sigma'_1}{\ell}
  \end{displaymath}
  where
  $\lbl(w \mkern1mu \alpha) = j$. Since $\bpair{\sigma_1}{\ell}$ and
  $\bpair{\sigma_2}{\ell}$ are bisimilar in~$\calC_k$, it follows that
  a corresponding path
  $\bpair{\sigma_2}{\ell} \pijlster{} \bpair{\sigma'_2}{\ell}$ exists
  in~$\calC_k$ with $\bpair{\sigma'_1}{\ell}$ and
  $\bpair{\sigma'_2}{\ell}$ bisimilar in~$\calC_k$. From this we
  derive that $\sigma_2 \pijl{a_j} \sigma'_2$ in~$\calB_k$
  and~$(\sigma'_1,\sigma'_2)\in R$. Hence, $R$~is a bisimulation relation
  for~$\calB_k$ indeed. It holds that bisimilarity of~$\calB_k$ is
  the identity relation. Thus, if two stake states $\bpair{\sigma_1}{\ell}$
  and~$\bpair{\sigma_2}{\ell}$ are bisimilar for~$\calC_k$, then
  $\sigma_1$ and~$\sigma_2$ are bisimilar for~$\calB_k$ thus
  $\sigma_1 = \sigma_2$, and therefore~$\bpair{\sigma_1}{\ell}
  = \bpair{\sigma_2}{\ell}$.
\end{proof}

\noindent
The next lemma states that the splitting of states
$\bpair{\sigma}{\ell} \in S^{\calC}$, for each level~$\ell$,
has refinement costs that are at least that of~$\calB_k$.

\begin{lem}
  %% [Squared Lemma]
  \label{lemma:inflate}
  It holds that $\IRC(\calC_k) \geqslant 2^k \IRC(\calB_k)$ for all~$k > 1$.
\end{lem}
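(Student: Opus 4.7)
The plan is to show that any valid refinement sequence $\Pi = (\pi_0^\calC, \pi_1, \ldots, \pi_n)$ for $\calC_k$ can, for each stake level $\ell \in \{1, \ldots, 2^k\}$, be projected onto a valid refinement sequence $\Pi_\ell$ for $\calB_k$ whose cost equals the level-$\ell$ contribution to $\IRC(\Pi)$. Summing over the $2^k$ levels and dropping the (nonnegative) tree-gadget contribution will yield $\IRC(\Pi) \geqslant 2^k \mkern1mu \IRC(\calB_k)$.

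Concretely, for each $\ell$, define $p_\ell : \Bit^k \to S^\calC_k$ by $p_\ell(\sigma) = \bpair{\sigma}{\ell}$ and set $p_\ell^{-1}(\pi) = \lc p_\ell^{-1}[B] \mid B \in \pi \rc \setminus \{ \emptyset \}$, a partition of $\Bit^k$. Then $p_\ell^{-1}(\pi_0^\calC) = \{ B_{\bit{0}}, B_{\bit{1}} \} = \pi_0^k$, the initial partition of $\calB_k$, and by Lemma~\ref{lem:heights}(\ref{fact-4-3}) the image $p_\ell^{-1}(\pi_n)$ is the singleton partition of $\Bit^k$. Discarding consecutive duplicates gives a strict refinement sequence $\Pi_\ell$ from $\pi_0^k$ to singletons. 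Since every block of every $\pi_i$ is either purely of type ``stake level $\ell$'' (for some $\ell$) or purely a subset of tree states, $\IRC(\pi_{i-1}, \pi_i)$ decomposes into per-level contributions plus a nonnegative tree contribution, and the level-$\ell$ contribution equals $\IRC(p_\ell^{-1}(\pi_{i-1}), p_\ell^{-1}(\pi_i))$. Summing over $i$ yields $\IRC(\Pi) \geqslant \sum_{\ell=1}^{2^k} \IRC(\Pi_\ell)$.

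The main obstacle is establishing the $\calB_k$-validity of each $\Pi_\ell$. Take a step $\varrho_j \to \varrho_{j+1}$ in $\Pi_\ell$ coming from consecutive partitions $\pi_{i-1}, \pi_i$ in $\Pi$, and pick $\sigma, \sigma' \in \Bit^k$ with $\sigma =_{\varrho_j} \sigma'$ but $\sigma \neq_{\varrho_{j+1}} \sigma'$, so $\bpair{\sigma}{\ell}$ and $\bpair{\sigma'}{\ell}$ lie together in $\pi_{i-1}$ but are split in $\pi_i$. By the validity of $\pi_{i-1} \to \pi_i$ in $\calC_k$ and determinism, some $\alpha \in \ofA$ sends them to targets split in $\pi_{i-1}$. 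If $\ell < 2^k$, the targets are $\bpair{\sigma}{\ell+1}, \bpair{\sigma'}{\ell+1}$; Lemma~\ref{lem:heights}(\ref{fact-4-1}) propagates this split up to level $2^k$, after which Lemma~\ref{lem:heights}(\ref{fact-4-2}) produces $w \in \ofA^\ast$ and $\alpha' \in \ofA$ with $\lbl(w \mkern1mu \alpha') = j_0$, together with states $\sigma'_1 = \calB_k(\sigma, a_{j_0})$ and $\sigma'_2 = \calB_k(\sigma', a_{j_0})$ satisfying $\bpair{\sigma'_1}{1} \neq_{\pi_{i-1}} \bpair{\sigma'_2}{1}$. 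A second application of Lemma~\ref{lem:heights}(\ref{fact-4-1}) brings this split back down to level $\ell$, giving $\sigma'_1 \neq_{\varrho_j} \sigma'_2$, so $a_{j_0}$ is the required $\calB_k$-witness. The case $\ell = 2^k$ is handled analogously by invoking the inner induction in the proof of Lemma~\ref{lem:heights}(\ref{fact-4-2}) applied to the split tree roots $\apair{\sigma}{\varepsilon}, \apair{\sigma'}{\varepsilon}$.

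With validity in hand, $\IRC(\Pi_\ell) \geqslant \IRC(\calB_k)$ by definition of the latter as the minimum over valid sequences. Hence $\IRC(\Pi) \geqslant \sum_{\ell=1}^{2^k} \IRC(\Pi_\ell) \geqslant 2^k \mkern1mu \IRC(\calB_k)$, and taking the minimum over $\Pi$ gives the claim.
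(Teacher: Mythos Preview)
Your proof is correct and follows essentially the same approach as the paper's: project each partition in~$\Pi$ onto level~$\ell$ to obtain a sequence~$\Pi_\ell$ for~$\calB_k$, verify its validity via Lemma~\ref{lem:heights}\ref{fact-4-1}--\ref{fact-4-3}, and exploit the level-disjointness of the stake states to sum the per-level costs. Your cost decomposition via block purity is somewhat more explicit than the paper's, while the paper is a bit more careful in its use of Lemma~\ref{lem:heights}\ref{fact-4-2}: it first rewinds to the earliest index~$g$ at which $\bpair{\sigma}{2^k}$ and~$\bpair{\sigma'}{2^k}$ became separated before invoking the lemma, whereas you apply it directly at~$\pi_{i-1}$---this is harmless, but only because the lemma (read with ``split for~$\pi$'' meaning ``in different blocks of~$\pi$'') implicitly performs that rewind itself.
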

 
\begin{proof}
  Let $\Pi = ( \pi^\calC_0, \pi_1, \dots, \pi_n)$ be a valid
  refinement sequence for~$\calC_k$. We show that for each
  level~$\ell$, the sequence~$\Pi$ induces a valid refinement
  sequence~$\Pi^\ell$ for~$\calB_k$.

For each $\ell\in\Nat$, such that $1\leqslant \ell \leqslant m$, we define a
partial projection function~$p_\ell: S^\calC_k \rightharpoonup \Bit^k$. 
The mapping~$p_\ell$ maps states of shape $[\sigma,\ell]\in S^\calC_k$ of $\calC_k$
to $\sigma\in\Bit^k$ and is undefined on all other states. A block $B$
in a partition of~$\calC_k$ is mapped to the block $p_\ell[B]$ of $\Bit^k$,
by applying~$p_\ell$ on all elements, resulting in:
  \begin{displaymath}
    p_\ell[ B ] = \lc \sigma \in \Bit^k \mid
		\bpair{\sigma}{\ell} \in B \rc\mkern1mu.
  \end{displaymath}
  A partition $p_\ell(\pi)$ of $\calB_k$ is obtained by applying $p_\ell$ to a
  partition $\pi$ of $\calC_k$ and ignoring the empty blocks, i.e.
  $p_\ell (\pi) = \lc p_\ell[B] \mid B\in \pi \rc \setminus
  \emptyset$.
  The sequence~$\Pi^\ell = ( \pi^\ell_0 , \dots , \pi^\ell_m )$ is
  obtained from the sequence
  $( p_\ell( \pi^\calC_0 ), p_\ell(\pi_1), \dots, p_\ell(\pi_n))$ by
  removing possible duplicates. We verify that $\Pi^\ell$~is a valid
  refinement sequence for~$\calB_k$.

  First, we check that $\pi^\ell_{i}$ is a refinement
  of~$\pi^\ell_{i{-}1}$, for $1 \leqslant i \leqslant m$. Choose index~$i$ arbitrarily. Let the index~$h$ with
  $1 \leqslant h \leqslant n$ be such that
  $p_\ell(\pi_{h{-}1}) = \pi^\ell_{i{-}1}$ and
  $p_\ell(\pi_{h}) = \pi^\ell_i$. 
  Then we fix a block $B\in\pi^\ell_i$. Since $\pi^\ell_i = p_\ell(\pi_h)$  there is a block $B'\in \pi_{h}$ such that $B = p_\ell[B']$. Since
 	$\pi_{h}$ is a refinement of $\pi_{h{-}1}$ there is a block
 	$B''\in \pi_{h{-}1}$ such that $B'\subseteq B''$. This implies
 	that $p_\ell[B'] \subseteq p_\ell[B'']$ and since
 	$B = p_\ell[B'] \neq \emptyset$ also $p_\ell[B''] \neq \emptyset$.
 	So, we conclude that $p_\ell[B'']\in \pi^\ell_{i{-}1}$ and
 	$B\subseteq p_\ell[B'']$. Thus $\pi^\ell_{i}$ is a refinement of
 	$\pi^\ell_{i{-}1}$.

  Next, we verify that $\Pi^\ell$ is a valid refinement sequence
  for~$\calB_k$. Suppose the state $\sigma_1, \sigma_2 \in S^\calB_k$
  are split for the refinement of~$\pi^\ell_{i{-}1}$
  into~$\pi^\ell_{i}$. Then the states
  $\bpair{\sigma_1}{\ell}, \bpair{\sigma_2}{\ell} \in S^\calC_k$ are
  split for the refinement of a partition~$\pi_{h{-}1}$ into the
  partition~$\pi_{h}$ for some index~$h$, with
  $1 \leqslant h \leqslant n$. Then either (i)~$\ell = 2^k$ and
  $\bpair{\sigma_1}{\ell}$ and~$\bpair{\sigma_2}{\ell}$ have
  $\alpha$-transitions to different blocks, for
  some~$\alpha \in \ofA$, or (ii)~$\ell < 2^k$ and
  $\bpair{\sigma_1}{\ell{+}1}$ and~$\bpair{\sigma_2}{\ell{+}1}$ are in
  different blocks of~$\pi_{h{-}1}$. In the case of~(ii), it follows
  by Lemma~\ref{lem:heights} that also $\bpair{\sigma_1}{2^k}$
  and~$\bpair{\sigma_2}{2^k}$ are in different blocks
  of~$\pi_{h{-}1}$. Thus, the refinement of some~$\pi_{g{-}1}$
  into~$\pi_{g}$, $1 \leqslant g \leqslant h \leqslant n$, split
  the two states $\bpair{\sigma_1}{2^k}$
  and~$\bpair{\sigma_2}{2^k}$. By Lemma~\ref{lem:heights} there are
  $w \in \ofA^\ast$, $\alpha \in \ofA$, and
  $\sigma'_1, \sigma'_2 \in \Bit^k$ such that
    \begin{displaymath}
      \bpair{\sigma_1}{2^k} \pijlsterc{w} \apair{\sigma_1}{w}
      \cpijl{\alpha} \bpair{\sigma_1'}{1}
      \quad \text{and} \quad
      \bpair{\sigma_2}{2^k} \pijlsterc{w} \apair{\sigma_2}{w}
      \cpijl{\alpha} \bpair{\sigma_2'}{1}
    \end{displaymath}
    with $\bpair{\sigma_1'}{1}$ and~$\bpair{\sigma_2'}{1}$ in different
    blocks of~$\pi_{g{-}1}$. Hence, $\sigma'_1$ and~$\sigma'_2$ are in
    different blocks of~$\pi^\ell_{i{-}1}$ while $\sigma_1 \bpijl{a_j}
    \sigma'_1$ and $\sigma_2 \bpijl{a_j} \sigma'_2$ for~$j = \lbl(w
    \mkern1mu \alpha)$, which justifies splitting $\sigma_1$
    and~$\sigma_2$ for $\pi^\ell_i$. We conclude that $\Pi^\ell$ is a
    valid refinement sequence for~$\calB_k$.
	
    We have established that if $\Pi$ is a valid refinement sequence
    for~$\calC_k$, then $\Pi^\ell$ is a valid refinement sequence
    for~$\calB_k$. The sequence~$\Pi^\ell$ is obtained from~$\Pi$ by
    sifting out the blocks of~$\Pi$'s partitions and removing repeated
    partitions. Therefore it holds that
    $\IRC(\Pi) \geqslant \IRC(\Pi^\ell)$. Since the mappings $p_\ell$
    and~$p_{\ell'}$ include pairwise distinct sets of stake states
    for~$\ell \neq \ell'$, $1 \leqslant \ell \leqslant 2^k$, it
    follows that
    \begin{math}
      \IRC(\Pi) \geqslant \sum_{\ell=1}^{2^k} \: \IRC(\Pi^\ell)
      \geqslant 2^k \IRC(\calB_k) \mkern1mu .
    \end{math}
    Taking the minimum over all valid refinement sequences
    for~$\calC_k$ we conclude that
    $ \IRC(\calC_k) \geqslant 2^k \IRC(\calB_k)$ as was to be shown.
\end{proof}

\noindent
With the above technical lemma in place, we are able to strengthen the
$\Omega(n \log n)$ lowerbound of Theorem~\ref{thm:inherent_complexity}
by now taking the number of transitions into account. The improved lowerbound
is $\Omega((m+n)\log n)$, where $m$ is the number of transitions and 
$n$ the number of states.

\begin{thm}
  \label{thm:inflated_inherent_complexity}
  Deciding bisimilarity for (deterministic) LTSs with a partition
  refinement algorithm is $\Omega((m+n)\log n)$, where $n$~is the
  number of states and $m$~is the number of transitions of the LTS\@.
\end{thm}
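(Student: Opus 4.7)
The plan is to combine Lemma~\ref{lemma:inflate} with Theorem~\ref{thm:inherent_complexity} applied to the layered bisplitter family $\calC_k$, and to verify that the parameters $n$ and $m$ of $\calC_k$ are of the right order to turn the resulting bound into $\Omega((m+n)\log n)$.

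First I would count the states and transitions of~$\calC_k$ in terms of~$k$. For each bit string $\sigma \in \Bit^k$ there are $2^k$ stake states $\bpair{\sigma}{\ell}$, giving $2^k \cdot 2^k = 2^{2k}$ stake states in total. Each tree gadget is a complete binary tree of height $\lceil \log((k{-}1)/2) \rceil$, so it contains $O(k)$ nodes, and together the $2^k$~tree gadgets contribute $O(k \cdot 2^k)$ states. Hence the number of states of~$\calC_k$ satisfies $n = 2^{2k} + O(k \cdot 2^k) = \Theta(2^{2k})$, which in particular gives $\log n = \Theta(k)$. Since $\calC_k$ is deterministic over the two-element alphabet~$\ofA$, every state has exactly two outgoing transitions, so $m = 2n$ and therefore $m + n = \Theta(n)$.

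Next, I would chain the two earlier results. Theorem~\ref{thm:inherent_complexity} shows $\IRC(\calB_k) \geqslant \tfrac{1}{2} \cdot 2^k \log \tfrac{1}{2} \cdot 2^k = \tfrac{1}{2}(k{-}1)\, 2^{k-1}$, and Lemma~\ref{lemma:inflate} amplifies this by the factor~$2^k$, giving
\begin{displaymath}
  \IRC(\calC_k) \;\geqslant\; 2^k \cdot \IRC(\calB_k)
  \;\geqslant\; 2^k \cdot \tfrac{1}{2}(k{-}1)\, 2^{k-1}
  \;=\; \tfrac{k-1}{4} \cdot 2^{2k}.
\end{displaymath}
Substituting the parameter relations yields $\IRC(\calC_k) = \Omega(k \cdot 2^{2k}) = \Omega(n \log n) = \Omega((m+n)\log n)$, as required.

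No step here should be genuinely difficult; the theorem is essentially an accounting corollary. The only place where one must be careful is to confirm that the additive tree-gadget states do not inflate~$n$ enough to change the asymptotics (they do not, because $k \cdot 2^k = o(2^{2k})$) and that $m = \Theta(n)$ so that $(m+n)\log n$ collapses to $\Theta(n \log n)$ for this family.
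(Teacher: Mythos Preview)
Your proposal is correct and follows essentially the same route as the paper: combine Theorem~\ref{thm:inherent_complexity} with Lemma~\ref{lemma:inflate} to get $\IRC(\calC_k) \geqslant 2^{2k-1}(k{-}1)$, then observe that for $\calC_k$ one has $n = \Theta(2^{2k})$, $m = 2n$, and $\log n = \Theta(k)$, yielding $\Omega((m{+}n)\log n)$. There is a harmless factor-of-two slip in your simplification of $\tfrac{1}{2}\cdot 2^k \log(\tfrac{1}{2}\cdot 2^k)$ --- it equals $(k{-}1)\,2^{k-1}$, not $\tfrac{1}{2}(k{-}1)\,2^{k-1}$ --- but this does not affect the asymptotic conclusion.
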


\begin{proof}
  For the bisplitter~$\calB_k$, we know by
  Theorem~\ref{thm:inherent_complexity} that
  $\IRC(\calB_k) \geqslant 2^{k{-}1} (k{-}1)$. Thus, by
  Lemma~\ref{lemma:inflate}, we obtain
  $\IRC(\calC_k) \geqslant 2^{2k{-}1} (k{-}1)$. In the case
  of~$\calC_k$ we have for $n$ and~$m$ that
  $n = 2^k (2^k + 2^{\lceil \mkern1mu \log (k-1) \rceil}-1)$ and $m = 2n$.
  Hence $n+m \in \Theta( 2^{2k{-}1} )$ and $\log n \in \Theta(k-1)$,
  from which it follows that~$\IRC(\calC_k) \in \Omega((m+n) \log n)$.
\end{proof}

\noindent
Underlying the proof of the lowerbound for deciding bisimilarity for the
family of layered bisplitters~$\calC_k$ is the observation that
each~$\calC_k$ can be seen as $2^k$~stacked instances of the ordinary
bisplitters~$\calB_k$, augmented with tree gadgets to handle
transitions properly. The other essential ingredient for the proof of
Theorem~\ref{thm:inflated_inherent_complexity} is the complexity of
deciding bisimilarity with a partition refinement algorithm on the
$\calB_k$ family. The same reasoning applies when considering
partition refinement algorithms with an oracle for end structures from
Section~\ref{sec:oracle}. Also with an oracle the lowerbound
of~$\Omega((m{+}n) \log n)$ remains.
	
\begin{thm}
  Any partition refinement algorithm with an oracle for end structures
  that decides bisimilarity for (deterministic) LTSs is
  $\Omega((m+n) \log n)$.
\end{thm}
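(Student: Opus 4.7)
The plan is to imitate the proof of Theorem~\ref{thm:inflated_inherent_complexity}, replacing the appeal to Theorem~\ref{thm:inherent_complexity} (without oracle) by the appeal to the oracle result from Section~\ref{sec:oracle}. Concretely, I would show that applying the end structure oracle to $\calC_k$ and then running partition refinement has cost at least $2^k \mkern1mu \IRC(\calB'_k)$, where $\calB'_k$ is the bisplitter with oracle-updated initial partition. Since we already know $\IRC(\calB'_k) \geqslant \IRC(\calB_{k-2}) \geqslant \frac{1}{2} \tilde{n} \log \frac12 \tilde{n}$ with $\tilde{n} = 2^{k-2}$, the same counting as in Theorem~\ref{thm:inflated_inherent_complexity} will give the desired $\Omega((m+n)\log n)$ bound.

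First I would identify the end structures of $\calC_k$. By Lemma~\ref{lem:heights}\ref{fact-4-3}, distinct stake states $\bpair{\sigma}{\ell}$ and $\bpair{\sigma'}{\ell}$ are never bisimilar, so an end structure in $\calC_k$ must consist precisely of the stakes and tree gadgets lying on a cycle of the graph obtained by forgetting action labels. Tracing transitions shows these are exactly the two sets corresponding to the bisplitter's end structures $\{\bit0^k\}$ and $\{\bit1\bit0^{k-1}\}$ from the proof of Lemma~\ref{lemma:oracle_projection}: the stake and tree gadget of $\bit0^k$ form one cycle (since all $\calB_k$-transitions from $\bit0^k$ are self-loops), and similarly for $\bit1\bit0^{k-1}$. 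The oracle therefore yields an updated initial partition $\pi'^{\calC}_0$ that refines $\pi^\calC_0$ only by splitting off singleton bisimilarity classes for these specific stake and tree-gadget states, while leaving every other block of $\pi^\calC_0$ intact.

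Next I would adapt the projection argument of Lemma~\ref{lemma:inflate}. Given a valid refinement sequence $\Pi = (\pi'^{\calC}_0, \pi_1, \ldots, \pi_n)$ for $\calC_k$ with the updated initial partition, define for each level~$\ell$ the projection $p_\ell$ exactly as before, and let $\Pi^\ell$ be the duplicate-free sequence obtained from $(p_\ell(\pi'^{\calC}_0), p_\ell(\pi_1), \ldots, p_\ell(\pi_n))$. The key observation is that $p_\ell(\pi'^{\calC}_0)$ is precisely the oracle-updated initial partition $\pi'_0$ of $\calB_k$ used in Section~\ref{sec:oracle}, because projecting the stake singletons $\{\bpair{\bit0^k}{\ell}\}$ and $\{\bpair{\bit1\bit0^{k-1}}{\ell}\}$ yields the singletons $\{\bit0^k\}$ and $\{\bit1\bit0^{k-1}\}$, and projecting the remaining level-$\ell$ stake classes yields $B_\bit{0}\setminus\{\bit0^k\}$ and $B_\bit{1}\setminus\{\bit1\bit0^{k-1}\}$. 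Validity of $\Pi^\ell$ for $\calB'_k$ and the inequality $\IRC(\Pi) \geqslant \sum_{\ell=1}^{2^k} \IRC(\Pi^\ell) \geqslant 2^k \mkern1mu \IRC(\calB'_k)$ then follow verbatim from the proof of Lemma~\ref{lemma:inflate}, since $p_\ell$ only ever sees stake states and the arguments there never rely on the particular form of the initial partition.

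Combining this inflation with Lemma~\ref{lemma:oracle_projection} and Theorem~\ref{thm:inherent_complexity} yields $\IRC(\calC'_k) \geqslant 2^k \mkern1mu \IRC(\calB_{k-2}) \in 2^k \cdot \Omega(2^{k-2} (k-2))$, and with $n+m \in \Theta(2^{2k-1})$ and $\log n \in \Theta(k-1)$ for $\calC_k$ (as computed in the proof of Theorem~\ref{thm:inflated_inherent_complexity}), the lowerbound $\Omega((m+n)\log n)$ is immediate. The main obstacle is the verification that the end structures of $\calC_k$ really contribute only to singletons under the oracle refinement, since one must rule out that larger portions of the stake/tree network collapse into the oracle-identified classes; this is handled cleanly by Lemma~\ref{lem:heights}\ref{fact-4-3}, which guarantees that no two distinct stakes can ever become bisimilar.
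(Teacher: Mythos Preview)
Your approach is correct and takes a genuinely different route from the paper. The paper projects the oracle-augmented~$\calC'_k$ directly down to~$\calC_{k{-}2}$, in the style of Lemma~\ref{lemma:oracle_projection}, and then invokes Theorem~\ref{thm:inflated_inherent_complexity} for~$\calC_{k{-}2}$. You instead factor through~$\calB'_k$: first apply the level-wise projection~$p_\ell$ of Lemma~\ref{lemma:inflate} to~$\calC'_k$, observe that $p_\ell(\pi'^{\calC}_0)$ is exactly the oracle-updated initial partition of~$\calB_k$, and then invoke Lemma~\ref{lemma:oracle_projection} as a black box. This is more modular and reuses the two existing projection lemmas verbatim, whereas the paper's direct $\calC'_k \to \calC_{k{-}2}$ projection is only sketched and needs its own verification. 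Both routes give the same asymptotics.

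One minor imprecision: you say the oracle ``splits off singleton bisimilarity classes for these specific stake and tree-gadget states''. For the stake states this is exactly right (and is all you need, since $p_\ell$ only sees stake states), but the bisimulation classes of tree-gadget states in the end structures are generally \emph{not} singletons: as the paper notes, they may absorb tree-gadget nodes of other stakes whose outgoing transitions happen to target~$\bpair{\bit0^k}{1}$ or~$\bpair{\bit{10}^{k-1}}{1}$. This does not affect your argument because the initial partition already separates stake states from tree-gadget states, so no stake state can be dragged into an oracle class via a tree-gadget state; you should just phrase the claim as ``the oracle update, restricted to the stake blocks~$C^\ell_b$, splits off exactly the two singletons''. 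You should also note, when claiming that Lemma~\ref{lem:heights}\ref{fact-4-1} and~\ref{fact-4-2} transfer verbatim, that the base case of part~\ref{fact-4-1} for the new initial partition~$\pi'^{\calC}_0$ still holds because the level-$\ell$ block structure of~$\pi'^{\calC}_0$ is the same at every level.
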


\begin{proof}% [Proof sketch.]
  The proof is similar to that of~Lemma~\ref{lemma:oracle_projection}
  and Theorem~\ref{thm:inflated_inherent_complexity}. Consider, for
  some~$k > 2$, the layered bisplitter~$\calC_k$ having initial
  partition~$\pi_0^\calC$\!. The LTS~$\calC_k$ has two end structures,
  viz.\ the set~$S_0 \subseteq S^\calC_k$ containing the states of the
  stake and accompanying tree gadget
  $S_0 = \lc \bpair{\bit{0}^k}{\ell} \mid 1 \leqslant \ell \leqslant
  2^k \rc \cup \lc \apair{\bit{0}^k}{w} \mid w \in \ofA^\ast ,\, |w|
  \leqslant \lceil \log (\frac{k-1}{2}) \rceil\}$ for~$\bit{0}^k$ and
  a similar end structure~$S_1 \subseteq S^\calC_k$
  for~$\bit{10}^{k{-}1}$. The sets $S_0$ and~$S_1$ are minimally
  closed under the transitions of~$\calC_k$. Other states, on the
  stake or tree gadget for a string~$\sigma$, have a path to these
  sets inherited from a path from $\sigma$ to $\bit{0}^k$
  or~$\bit{10}^k$ in~$\calB_k$. The bisimulation classes $S'_0$
  and~$S'_1$, say, with respect to~$S^\calC_k$ rather
  than~$\pi_0^\calC$, consist of $S_0$ and~$S_1$ themselves plus a
  part of the tree gadgets for transitions in~$\calC_k$ leading to
  $S_0$ and~$S_1$, respectively.

  The update of the initial partition~$\pi_0^\calC$ with oracle information,
  which concerns, ignoring the tree gadgets, the common refinement of
  the layers $\lc \bpair{\sigma}{\ell} \mid \sigma \in B_0 \rc$ and
  $\lc \bpair{\sigma}{\ell} \mid \sigma \in B_1 \rc$ on the one hand,
  and % (a trivial refinement of)
  the bisimulation classes~$S'_0$ and~$S'_1$ on the other hand, is therefore equal to~$\pi_0^\calC$ on the
  stakes, and generally finer on the tree gadgets.

  Next, every valid refinement sequence
  $\Pi = ( \pi_0', \pi_2, \ldots, \pi_n )$ for the updated LTS
  $\calC'_k = (S, \Act, {\rightarrow}, \pi'_0)$ satisfies
  $\IRC(\Pi) \geqslant \IRC(\calC_{k{-}2})$. Following the lines of
  the proof of Lemma~\ref{lemma:oracle_projection}, we can show that a
  valid refinement sequence~$\Pi$ for $\calC_k$ with updated initial
  partition~$\pi'_0$ induces a valid refinement sequence~$\Pi'$ for
  $\calC_{k-2}$.

  The number of states in $\calC_{k{-}2}$ is $\Theta(n)$ with~$n$ the
  number of states of~$\calC_k$, and the number of transitions
  in~$\calC_{k{-}2}$ is $\Theta(m)$ with~$m$ the number of transitions
  of~$\calC_k$. Therefore,
  $\IRC(\Pi) \geqslant \IRC(\Pi') \geqslant \IRC(\calC_{k{-}2})$, from
  which we derive that any partition refinement algorithm with an oracle
  for end structures involves $\Theta((m{+}n) \log n)$ times moving a
  state for~$\calC_k$, and hence, the algorithm is
  $\Omega((m{+}n) \log n)$.
\end{proof}

%!TEX root =  main.tex
%% file parallel.tex

%%%%%%%%%%%%%%%%%%%%%%%%%%%%%%%%%%%%%%%%%%%%%%%%%%%%%%%%%%%%%%%%%%%%%%

\section{An $\Omega(n)$ lowerbound for parallel partitioning algorithms}
\label{sec:parallel}

In this section we pose the question of the effect of the concept of
valid refinements on parallel partition refinement algorithms. We show
an $\Omega(n)$ lowerbound.  This result was already suggested in
\cite[Theorem 3]{Kulakowski2013}, without making explicit which
operations are allowed to calculate the refinement. In particular, for
deterministic LTSs with singleton alphabets, an $\bigO(\log n)$
parallel refinement algorithm~\cite{jaja1994efficient} exists, defying
the argumentation of \cite{Kulakowski2013}. This latter algorithm
clearly is not based on valid refinements.

Parallel bisimulation algorithms are most conveniently studied in the
context of PRAMs (Parallel Random Access Machine)
\cite{StockmeyerV84}, which have an unbounded
number of processors that can all access the available memory. PRAMs
are approximated by GPUs (Graphical Processing Units) that
currently contain thousands of processor cores, but more
interestingly, in combination with the operating system, can run
millions of independent threads simultaneously.

There are a few variants of the PRAM model. The most important
variation is in what happens when multiple processors try to
write to the same address in memory. In the \emph{common} scheme a
write to a particular address takes place if all processors writing to
this address write the same value. Otherwise, the write fails and the
address will contain an arbitrary value. In the \emph{arbitrary}
scheme, one of the processors writing to the address will win,
and writes its value; the writes of other processors to the address
are ignored. In the \emph{priority} scheme, the processor with the lowest
index writes to the address.

A number of algorithms have been proposed to calculate bisimulation on
PRAMs or GPUs \cite{lee_1994_RCPP,rajasekaran1998parallel, wijs_2015,
  martens2021linear}, and there are also parallel algorithms developed
for networks of parallel computers~\cite{blom2005distributed}.  The
algorithms in \cite{lee_1994_RCPP,rajasekaran1998parallel} require
$\bigO(n \log n)$ time on respectively
$\frac{m}{\mkern1mu \log n \mkern1mu}\log\log n$ and
$\frac{m}{n}\log n$ processors.  The algorithm in
\cite{martens2021linear} has the best worst-case
time complexity of~$\bigO(n)$ and uses $\max(n,m)$ processors.  All
these parallel algorithms have in common that they can be classified
as partition refinement algorithms in the sense that they all
calculate a valid sequence of partitions.

Note that parallel refinement algorithms can fundamentally 
outperform sequential algorithms.
In order to understand why parallel algorithms achieve an 
upperbound of ~$\bigO(n)$ vs.\ a lowerbound of 
$\Omega((m+n)\log n)$ for sequential algorithms, we look at the
algorithm in \cite{martens2021linear} in more detail as it has the
best time complexity. For the sake of exposition we assume here that
there is only one action, although the story with multiple actions is
essentially the same.  In the algorithm, first an unstable block is
chosen. All states reaching this block are marked, which is done in
constant time, by one processor per transition. Subsequently, all
marked states in each block separate themselves from the other states
in constant time using one processor per state, by employing an
intricate trick where each block is characterised by a unique `leader'
state in each block. Here it is essential that the PRAM model
uses the arbitrary or priority writing scheme. The algorithm does not
work in the common scheme.  In \cite{martens2021linear} it is shown
that at most $3n$ of these constant-time splitting
steps need to be performed, leading to a time complexity of
$\bigO(n)$.

An important observation is that parallel algorithms allow to split
blocks in constant time, whereas for sequential algorithms we defined
the refinement costs as the minimal number of states that had to be
moved from one block to a new block.
We assume, similarly to the sequential setting,
that a new refinement can only be calculated on the
basis of a previous refinement.  This makes it natural to define the
notion of parallel refinement costs as the minimal
conceivable length of a valid refinement sequence and take this as the
minimal time required to calculate bisimulation using partitioning in
a parallel setting.

For an LTS $L$ and a sequence $\Pi = (\pi_0, \dots, \pi_n)$, the
\textit{parallel refinement cost} is the number of refinements in the
sequence, $\PIRC(\Pi) = n$. For an LTS $L$ we define
\begin{displaymath}
  \PIRC(L) = \min \lc \PIRC{(\Pi)} \mid
  \text{$\Pi$ is a valid refinement sequence for $L$} \rc
  \mkern1mu .
\end{displaymath}
Observe that parallel refinement costs allow for
extremely fast partitioning of transition systems. Below we
show an example with $2^{k}+k$ states with a refinement cost
of~$1$. The states are given by $b_0, \ldots, b_{k{-}1}$,
$a_0,\ldots,a_{2^k-1}$. There is a transition from $a_i$ to $b_j$ if
the $j$-th bit in the binary representation of~$i$ is~$1$. The initial
partition $\pi_\textit{init}$ groups all states~$a_i$ in one
partition, and puts each state~$b_j$ in a partition of its own. So,
$\pi_\textit{init}$ contains $k{+}1$~blocks.  In
Figure~\ref{fastparallelpartition} this transition system is depicted
for~$k=3$.

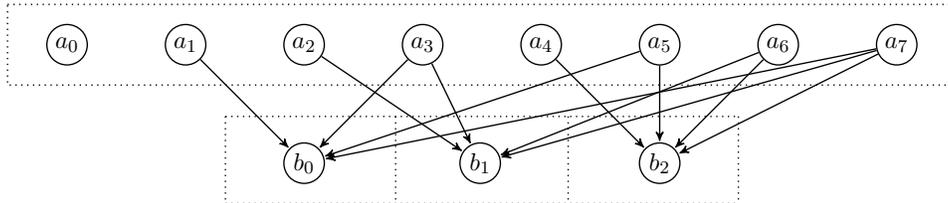
\begin{figure}[htb]
  \scalebox{0.85}{%
\begin{tikzpicture}[scale=1,
every node/.style={scale=1},
node distance = 1.2cm, initial text=]

\node[state] (0) {$a_0$};
\node[state, right= of 0] (1) {$a_1$};
\node[state, right= of 1] (2) {$a_2$};
\node[state, right= of 2] (3) {$a_3$};
\node[state, right= of 3] (4) {$a_4$};
\node[state, right= of 4] (5) {$a_5$};
\node[state, right= of 5] (6) {$a_6$};
\node[state, right= of 6] (7) {$a_7$};
\node[state, below= of 2] (b0) {$b_0$};
\node[state, right=2.1 of b0] (b1) {$b_1$};
\node[state, below= of 5] (b2) {$b_2$};

\path[->]
(1) edge (b0)
(2) edge (b1)
(3) edge (b0)
(3) edge (b1)
(4) edge (b2)
(5) edge (b0)
(5) edge (b2)
(6) edge (b1)
(6) edge (b2)
(7) edge (b0)
(7) edge (b1)
(7) edge (b2);

\coordinate [above left=4mm and 7mm of 0] (linehighA);
\coordinate [below left=4mm and 7mm of 0]  (linelowA);
\coordinate [above right=4mm and 7mm of 7] (linehighB);
\coordinate [below right=4mm and 7mm of 7] (linelowB);

\draw [dotted, -] (linehighA) -- (linehighB) -- (linelowB) --
(linelowA) -- cycle ;

\coordinate [above left=5mm and 10mm of b0] (linehigh1);
\coordinate [below left=4mm and 10mm of b0]  (linelow1);
\coordinate [above right=5mm and 12mm of b0] (linehigh2);
\coordinate [below right=4mm and 12mm of b0] (linelow2);
\coordinate [above left=5mm and 12mm of b2] (linehigh3);
\coordinate [below left=4mm and 12mm of b2] (linelow3);
\coordinate [above right=5mm and 10mm of b2] (linehigh4); 
\coordinate [below right=4mm and 10mm of b2] (linelow4);

\draw [dotted, -] (linehigh1) -- (linehigh4);
\draw [dotted, -] (linelow1) -- (linelow4);
\draw [dotted, -] (linehigh1) edge (linelow1);
\draw [dotted, -] (linehigh2) edge (linelow2);
\draw [dotted, -] (linehigh3) edge (linelow3);
\draw [dotted, -] (linehigh4) edge (linelow4);

\end{tikzpicture}
  } %% scalebox
  \caption{A transition system with a parallel refinement cost of $1$}
  \label{fastparallelpartition}
\end{figure}

\noindent
The shortest valid refinement sequence is
$(\pi_\textit{init},\pi_\textit{final})$, where in
$\pi_\textit{final}$ each state is in a separate block. This
refinement is valid, because in $\pi_\textit{init}$ there is enough
information to separate each state from any other, as can easily be
checked against the definition. As this refinement sequence has length
$1$, the parallel refinement cost of this transition system is~$1$,
indicating that it is conceivable to make a bisimulation partitioning
algorithm doing this refinement in constant time. Note that existing
parallel algorithms do not achieve this performance. For instance, the
algorithm in \cite{martens2021linear} requires linear time as it
checks stability for each new block sequentially.

Although parallel partitioning can be fast, we show, using the notion
of parallel refinement costs, that calculating bisimulation
in parallel requires time $\Omega(n)$. For this purpose, we construct a family of
LTSs $\calD_n$ for which the length of any valid refinement sequence
grows linearly with the number of states.

\begin{defi} 
  \label{def:seqsplitter}
  For $n > 2$, the sequential splitter
  $\calD_n = (S, \{a\}, {\rightarrow}, \pi_0)$ is defined as the LTS
  that has the set $S = \lc 1, \dots, n \rc$ as its set of states, the
  relation
  \begin{displaymath}
    {\pijl{}} = \lc (i,i{+}1) \mid 1 \leqslant i < n\} \cup \{ (n,n) \}
  \end{displaymath}
  as the transition relation, and the set
  $\pi_0 = \lc \{ 1, \ldots, n{-}1 \rc, \lc n \} \rc$ as its initial
  partition.
\end{defi}

\noindent
For every $n > 2$, the deterministic LTS
$\calD_n = (S,\Sigma, \rightarrow, \pi_0)$ has $n$~states and
$n$~transitions. For~$n=8$ the transition system is
depicted in Figure~\ref{fig:sequential-example}. The set of states
$S = \{1,\dots, n\}$ form a chain in which every state
$i\in \{1, \dots, n{-}1\}$ has an outgoing transition to the next
state $(i, i{+}1) \in {\rightarrow}$. The state at the end of the
chain~$n \in S$ has a self loop $(n,n)\in {\rightarrow}$. In the
initial partition there are two blocks, one block containing~$n \in S$
and the other block containing all other states. 

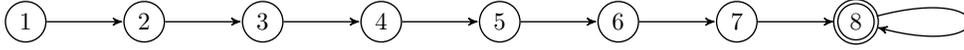
\begin{figure}[bht]
  \begin{center}
    % !TEX root =  main.tex
%% file drawing-b1-b2.tex

\scalebox{0.85}{%
\begin{tikzpicture}[scale=1,
every node/.style={scale=1},
node distance = 1.2cm, initial text=]

\node[state] (1) {$1$};
\node[state, right= of 1] (2) {$2$};
\node[state, right= of 2] (3) {$3$};
\node[state, right= of 3] (4) {$4$};
\node[state, right= of 4] (5) {$5$};
\node[state, right= of 5] (6) {$6$};
\node[state, right= of 6] (7) {$7$};
\node[state, accepting , right= of 7] (8) {$8$};

\path[->]
(1) edge (2)
(2) edge (3)
(3) edge (4)
(4) edge (5)
(5) edge (6)
(6) edge (7)
(7) edge (8)

%% EV: loop was above, now right
(8) edge [loop right, min distance=2cm] (8);

\end{tikzpicture}
} %% scalebox
  \end{center}
  \caption{Sequential splitter $\calD_8$ with initial partition
    $\{\{1,2,3,4,5,6,7\}, \{8\}\}$ 
    \label{fig:sequential-example}} 
\end{figure}

\noindent
The following lemma states that for every sequential splitter
$\calD_n$ there is a unique valid refinement sequence.

\begin{lem}\label{lemma:longsplitter}
  For every $n > 2$, $\calD_n$ has a \emph{unique} valid refinement
  sequence $\Pi_{n}$ that consists of $n{-}1$~partitions.
\end{lem}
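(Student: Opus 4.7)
The plan is to establish, by induction on $i$, that the partition at the $i$-th position of any valid refinement sequence for~$\calD_n$ must equal the specific partition
\[
\pi_i \;=\; \bigl\{\{1, \ldots, n{-}1{-}i\},\, \{n{-}i\},\, \{n{-}i{+}1\}, \ldots, \{n\}\bigr\}
\]
for $0 \leqslant i \leqslant n{-}2$. The base case $i = 0$ is immediate from the definition of $\calD_n$. Once this uniqueness is shown, since $\pi_{n-2}$ consists entirely of singletons (and is therefore stable), the sequence cannot be extended further, and so $\Pi_n = (\pi_0, \ldots, \pi_{n-2})$ is the unique valid refinement sequence of length~$n{-}1$.

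For the inductive step, assume the current partition is~$\pi_i$ and consider any valid refinement~$\pi_{i+1}$ of~$\pi_i$. The singleton blocks $\{n{-}i\}, \ldots, \{n\}$ cannot be split further, so all activity takes place inside the block $B = \{1, \ldots, n{-}1{-}i\}$. I observe that the unique $a$-successor of any state $s \in B$ is $s{+}1$: for $s < n{-}1{-}i$ the successor $s{+}1$ lies in~$B$, while for $s = n{-}1{-}i$ the successor is $n{-}i$, which lies in the singleton block~$\{n{-}i\}$ of~$\pi_i$. Applying condition~(b) of Definition~\ref{def:bis_valid} to any pair $s, t \in B$: if both $s, t < n{-}1{-}i$, then $s{+}1$ and $t{+}1$ both lie in~$B$, so no witness transition exists and $s, t$ must remain in the same block of~$\pi_{i+1}$. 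Hence the states $\{1, \ldots, n{-}2{-}i\}$ stay together in $\pi_{i+1}$. On the other hand, for $s = n{-}1{-}i$ and any $t \in \{1, \ldots, n{-}2{-}i\}$, the successors $n{-}i$ and $t{+}1$ lie in different blocks of~$\pi_i$, so they may be separated, and because $\pi_{i+1}$ is required to be a \emph{strict} refinement of~$\pi_i$, they must be. This forces $\pi_{i+1} = \pi_{i+1}$ as defined above.

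The only subtle point, and arguably the main obstacle, is ensuring that no `bonus' splitting can occur in~$\pi_{i+1}$, e.g.\ separating some $s, t \in \{1,\ldots,n{-}2{-}i\}$ using information about states elsewhere. This is handled precisely by the validity condition~(b): any separation of two previously equivalent states needs a witness~$a$-transition exhibiting distinct target blocks in~$\pi_i$, and the analysis above shows that within $\{1,\ldots,n{-}2{-}i\}$ no such witness exists. Combined with the argument that singleton blocks cannot be refined, this rules out any partition other than the claimed~$\pi_{i+1}$.

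Finally, $\pi_{n-2} = \{\{1\}, \{2\}, \ldots, \{n\}\}$ is stable (each block being a singleton trivially satisfies stability), so the refinement terminates there. Thus $\Pi_n = (\pi_0, \pi_1, \ldots, \pi_{n-2})$ is the unique valid refinement sequence, and it has $n{-}1$ partitions as claimed.
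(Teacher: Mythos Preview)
Your proof is correct and follows essentially the same approach as the paper: both arguments fix the explicit sequence of partitions, and show by induction that any valid refinement of~$\pi_i$ must keep the states $\{1,\ldots,n{-}2{-}i\}$ together (since their unique successors lie in the same block of~$\pi_i$, so condition~(b) of Definition~\ref{def:bis_valid} forbids separating them) while strictness forces $n{-}1{-}i$ to split off. Your write-up is in fact more explicit than the paper's in invoking the validity condition and in checking the base case; the only cosmetic issue is the phrase ``This forces $\pi_{i+1} = \pi_{i+1}$,'' where you mean that the arbitrary valid refinement coincides with the specific partition you defined---consider naming one of them differently.
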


\begin{proof}
  For $n > 2$ the sequential splitter $\calD_n$ has a valid partition
  refinement sequence which is given by
  $\Pi_{\calD_n} = ( \pi_1, \dots, \pi_{n{-}1} )$, where
  $\pi_i = \lc \{1,\dots, n{-}i\},
  \{n{-}i{+}1\},\{n{-}i{+}2\},\dots,\{n\} \rc$. This is proven by
  induction on the index~$i$ of the partition~$\pi_i$.

  Next, we must show that this refinement sequence is unique. So, in
  order to obtain a contradiction, consider some $\pi_i$ and assume
  some valid refinement $\pi'_{i+1}$ different from $\pi_{i+1}$
  exists. This means that there must either be two states
  $j,j' \in \{1,\ldots,n{-}i{-}1\}$ that are in different blocks in
  $\pi'_{i+1}$, or state~$n{-}i$ must be in the same block as some
  state~$j <n{-}i$. In the first case states $j$ and~$j'$ are in the same block
  in~$\pi_i$ and have exactly the same transitions to the same block
  in~$\pi_i$. Hence, the states $j$ and $j'$ are in the same
  block in $\pi_{i{+}1}$.
  In the second case, all states $j'<n{-}i$ must be in the same block as
  $j$, using exactly the argument of the first case. The state $n{-}i$
  is also in that same block. But then $\pi'_{i+1}$ is not a strict
  refinement of $\pi_i$, making it invalid.
    
  Thus, for each~$i$, $1 \leqslant i \leqslant n{-}1$, no other valid
  refinement than $\pi_{i+1}$ of $\pi_i$ exists, making~$\Pi_{n}$
  the only valid refinement sequence for~$\calD_n$.
\end{proof}	

\noindent
The observation in Lemma~\ref{lemma:longsplitter} leads to the
following theorem on the time complexity for parallel partition
refinement algorithms.

\begin{thm}
  Any parallel partition refinement algorithm that decides
  bisimilarity for an LTS with $n$~states has
  time complexity~$\Omega(n)$.
\end{thm}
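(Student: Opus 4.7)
The plan is to invoke the family $\calD_n$ of sequential splitters together with Lemma~\ref{lemma:longsplitter}, which pins down a unique valid refinement sequence of length $n{-}1$. The theorem then follows almost immediately from the definition of $\PIRC$.

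First I would recall that for each $n > 2$, Lemma~\ref{lemma:longsplitter} shows that $\calD_n$ admits exactly one valid refinement sequence, namely $\Pi_n = (\pi_1, \ldots, \pi_{n{-}1})$. Consequently, any parallel partition refinement algorithm acting on $\calD_n$ must produce precisely this sequence, so
\begin{displaymath}
  \PIRC(\calD_n) = \min \lc \PIRC(\Pi) \mid
  \text{$\Pi$ valid refinement sequence for $\calD_n$} \rc = n - 2 \mkern1mu .
\end{displaymath}
Since $\calD_n$ has $n$ states, this shows that the parallel refinement cost grows linearly in the number of states.

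Next I would argue that parallel refinement cost is a lower bound for the running time of any parallel partition refinement algorithm. By the definition adopted in this section, each successive partition in the run of a parallel algorithm can only be computed from the immediately preceding one, and every run is reflected by a valid refinement sequence. Hence each refinement step in $\Pi_n$ corresponds to at least one time step of the algorithm, independently of how many processors are employed. Therefore the running time on input $\calD_n$ is at least $\PIRC(\calD_n) = n - 2 \in \Omega(n)$, which establishes the claim.

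I don't anticipate any real obstacle here: the non-trivial content has already been packed into Lemma~\ref{lemma:longsplitter}, whose uniqueness statement is exactly what closes off any room for parallel speed-up. The only subtlety is to make explicit the (already informally stated) assumption that a parallel refinement algorithm cannot shortcut a valid refinement sequence and that its running time is bounded below by the length of the sequence it produces; once this is articulated, plugging in $\calD_n$ finishes the argument.
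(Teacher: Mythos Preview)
Your proposal is correct and follows essentially the same route as the paper: invoke the family~$\calD_n$, appeal to Lemma~\ref{lemma:longsplitter} for uniqueness of the valid refinement sequence, and conclude that $\PIRC(\calD_n)\in\Omega(n)$ lower-bounds the running time of any parallel partition refinement algorithm. The only discrepancy is a harmless off-by-one (you obtain $n{-}2$ from a sequence of $n{-}1$ partitions, whereas the paper states $n{-}1$); either value yields~$\Omega(n)$.
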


\begin{proof}
  For every $n \geqslant 2$ the LTS $\calD_n$ has $n$~states and
  $n$~transitions. Any algorithm that is a parallel partition
  refinement algorithm has time complexity that is at least the length
  of the shortest refinement sequence.  According to
  Lemma~\ref{lemma:longsplitter} $\calD_n$ has a unique refinement
  sequence which witnesses $\PIRC(\calD_n) = n{-}1 \in \Omega(n)$.
\end{proof}

\noindent
The theorem puts a bound on the fastest possible
parallel, partition based algorithms for bisimulation.  But it must be
observed that other techniques than partition refinement can produce
faster algorithms, although it may be in more restricted settings.
Concretely, the algorithm~\cite{jaja1994efficient} that is based on the 
sequential Roberts' algorithm from Section~\ref{sec:roberts}, runs in
$\bigo{\log n}$ parallel time for deterministic transition systems
with $n$~states and only one transition label.

In order to obtain an idea of how these fast parallel algorithms work,
we illustrate one of the major techniques to determine a bisimulation
refinement of~${\calD_n}$ in time $\bigo{\log n}$.  In~${\calD_n}$ the
initial partition has a small block $B_\textit{small}=\{n\}$ and a
large block $B_\textit{large} = \{ 1, \ldots, n{-}1 \}$.  States
in~$B_\textit{large}$ with a different number of steps to
$B_\textit{small}$ cannot be bisimilar and can be split.  Note that
splitting on the basis of this distance is not a valid refinement in
the sense of Definition~\ref{def:bis_valid}.

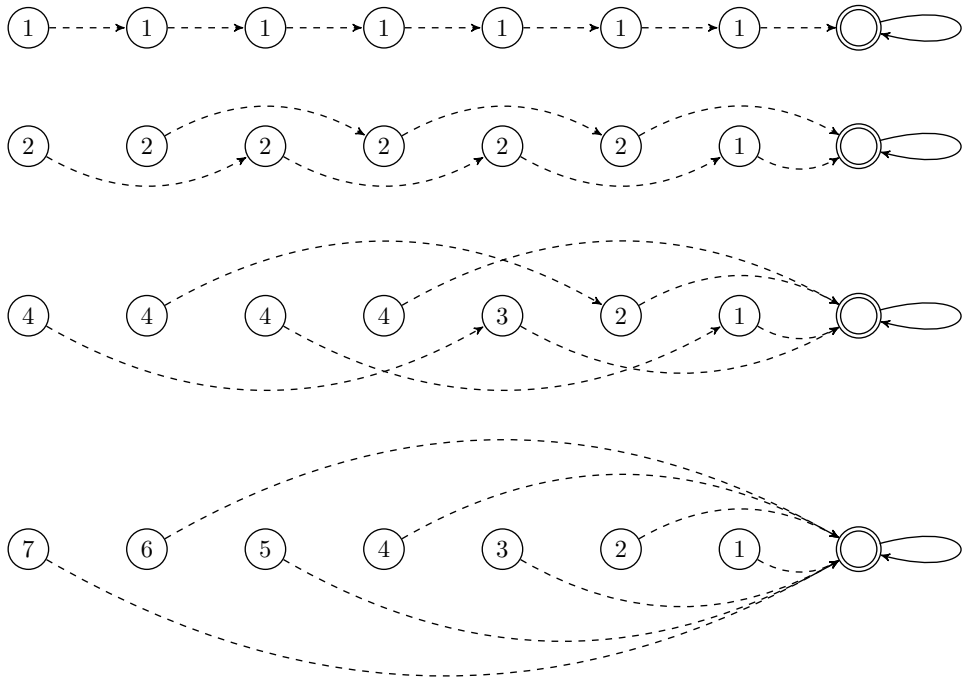
\begin{figure}[htb]
  \scalebox{0.85}{%
\begin{tikzpicture}[scale=1,
every node/.style={scale=1},
node distance = 1.2cm, initial text=]

\node[state] (11) {$1$};
\node[state, right= of 11] (21) {$1$};
\node[state, right= of 21] (31) {$1$};
\node[state, right= of 31] (41) {$1$};
\node[state, right= of 41] (51) {$1$};
\node[state, right= of 51] (61) {$1$};
\node[state, right= of 61] (71) {$1$};
\node[state,accepting, right= of 71] (81) {$ $};

\path[->, dashed]
(11) edge (21)
(21) edge (31)
(31) edge (41)
(41) edge (51)
(51) edge (61)
(61) edge (71)
(71) edge (81)

(81) edge[loop right, min distance=1.75cm, solid] (81);

%%%%%%%%%%%%%%%%%%%%%%%%%%%%%%%%%%%%%%%%%%%%%%%%%%%%%%%%%%%%%%%%%%
\node[state, below= of 11] (12) {$2$};
\node[state, right= of 12] (22) {$2$};
\node[state, right= of 22] (32) {$2$};
\node[state, right= of 32] (42) {$2$};
\node[state, right= of 42] (52) {$2$};
\node[state, right= of 52] (62) {$2$};
\node[state, right= of 62] (72) {$1$};
\node[state,accepting, right= of 72] (82) {$ $};

\path[->, dashed]
(12) edge[bend right] (32)
(22) edge[bend left]  (42)
(32) edge[bend right] (52)
(42) edge[bend left]  (62)
(52) edge[bend right] (72)
(62) edge[bend left]  (82)
(72) edge[bend right] (82)

(82) edge[loop right, min distance=1.75cm, solid] (82);

%%%%%%%%%%%%%%%%%%%%%%%%%%%%%%%%%%%%%%%%%%%%%%%%%%%%%%%%%%%%%%%%%%
\node[state, below=2 of 12] (13) {$4$};
\node[state, right= of 13] (23) {$4$};
\node[state, right= of 23] (33) {$4$};
\node[state, right= of 33] (43) {$4$};
\node[state, right= of 43] (53) {$3$};
\node[state, right= of 53] (63) {$2$};
\node[state, right= of 63] (73) {$1$};
\node[state,accepting, right= of 73] (83) {$ $};

\path[->, dashed]
(13) edge[bend right] (53)
(23) edge[bend left]  (63)
(33) edge[bend right] (73)
(43) edge[bend left]  (83)
(53) edge[bend right] (83)
(63) edge[bend left]  (83)
(73) edge[bend right] (83)

(83) edge[loop right, min distance=1.75cm, solid] (83);

%%%%%%%%%%%%%%%%%%%%%%%%%%%%%%%%%%%%%%%%%%%%%%%%%%%%%%%%%%%%%%%%%%
\node[state, below=3 of 13] (14) {$7$};
\node[state, right= of 14] (24) {$6$};
\node[state, right= of 24] (34) {$5$};
\node[state, right= of 34] (44) {$4$};
\node[state, right= of 44] (54) {$3$};
\node[state, right= of 54] (64) {$2$};
\node[state, right= of 64] (74) {$1$};
\node[state,accepting, right= of 74] (84) {$ $};

\path[->, dashed]
(14) edge[bend right] (84)
(24) edge[bend left]  (84)
(34) edge[bend right] (84)
(44) edge[bend left]  (84)
(54) edge[bend right] (84)
(64) edge[bend left]  (84)
(74) edge[bend right] (84)

(84) edge[loop right, min distance=1.75cm, solid] (84);

\end{tikzpicture}
  } %% scalebox
\caption{Calculating the distance to the rightmost state in
  $\bigO(\log n)$ time} 
\label{fig:counting}
\end{figure}

\noindent
Determining the distance of the states in $B_{\textit{large}}$ to
$B_\textit{small}$ can be done in $\bigO(\log n)$ time in parallel,
cf.~\cite{Hillis86}. The basic idea is explained in
Figure~\ref{fig:counting}.  Each state in the block at the left gets
weight~$1$. We desire to sum up in each state the weights of all
states to its right. We do this by adding up the weight of the right
neighbour and adapting the outgoing transition to point to the state
to which the right neighbour is pointing. We use dashed transitions to
stress that we are now using the transitions for another purpose. In
each round~$k$ a state contains the sum of all $2^{k}$
states to its right. So, after $\log_2 k$ rounds it contains the sum
of all $k$~states to its right, which is the distance
to~$B_\textit{small}$.
 
An interesting open question is whether the notion of a valid
refinement sequence can be adapted, such that the $\Omega(n)$
lowerbound would still apply when techniques such as parallel counting
formulated above would be incorporated in the bisimulation algorithm.

% !TEX root =  main.tex
%% file color-refinement.tex

\section{Colour refinement}
\label{sec:color-refinement}

As for establishing bisimilarity on labelled transition systems,
partition refinement is the standard approach for algorithms that do
colour refinement.
Colour refinement, also known as naive vertex classification or 
1-dimensional Weisfeiler-Lehman test,
is frequently applied, among others in the setting of deciding graph
isomorphism~\cite{GN21:cacm}.
Given a graph where each node has been assigned an initial colour,
colour refinement asks to find a refining colouring with the least
number of colours possible such that two nodes of the same colour have,
for all colours, the same number of neighbours of the latter colour. Also
for colour refinement, algorithms typically search for the coarsest
stable colouring.
See~\cite{GKMS21:mit} for
an overview.

Colour refinement is known to be $\Omega((m{+}n) \mkern1mu \log n)$
with $n$ the number of nodes and $m$~the number of edges.
Early algorithms of complexity $\bigo{(m{+}n) \mkern1mu \log n}$
include~\cite{CC82:tcs,paige1987three}. The lowerbound for partition
refinement algorithms has been established
in~\cite{Ber15:phd,berkholz2017tight}.  However, for colour refinement
the costs are measured in terms of inspected edges and hence are different
from the costs for computing bisimilarity by partition refinement.
The cost function underlying the complexity of colour refinement is
defined by
\begin{displaymath}
  \mathit{cost} (R,S) =
  | \mkern1mu \lc (u,v) \in E \mid u \in R ,\, v \in S \rc \mkern1mu |
  \mkern1mu ,
\end{displaymath}
i.e.\ the number of transitions between the blocks $R$ and~$S$ (sets
including all nodes of chosen colours), where block~$R$ is recoloured
in view of the colours in block~$S$.

The paper~\cite{berkholz2017tight} provides a detailed implementation
of an efficient algorithm with time complexity $\bigo{(m{+}n) \log n}$, 
that given a graph $G = (V,E)$ and colouring~$\alpha$
finds the coarsest stable colouring refining~$\alpha$ and compares the
algorithm to other proposals in the literature.
In this setting, a colouring~$\gamma : V \to \ofN$ of~$G$ is stable iff
$|\calN(u) \cap \gamma^{-1}(c)| = |\calN(v) \cap \gamma^{-1}(c)|$ for all
nodes $u,v \in V$ and all colours~$c \in \ofN$. Here, $\calN(u)$
and~$\calN(v)$ denote the sets of neighbours of $u$ and~$v$ in~$G$,
respectively.
In order to establish a lowerbound, Berkholz et al.\ define a family
of graphs~$\calG_k$
for which the costs of computing the coarsest stable colouring is
$\Omega({(m{+}n) \log n})$ starting from the unit colouring assigning
to all nodes the same colour. The paper also discusses the connection
of colour refinement with equivalence in 2-variable logic and with
finding bisimilarity on Kripke structures. Regarding the latter the
focus is on Kripke structures rather than labelled transition systems,
as in the present paper.  In order to transfer the lowerbound result
for colour refinement to a lowerbound result for bisimilarity by
partition refinement for Kripke structures, the stability requirement
for colouring mentioned above is adapted, viz.\ to
$\calNplus(u) \cap S = \emptyset \iff \calNplus(v) \cap S = \emptyset$ 
for all blocks~$R,S$ and nodes $u,v$ in~$R$, with $\calNplus(u)$ 
and~$\calNplus(v)$, i.e.\ the directly reachable states for $u$ 
and~$v$, respectively.

Central to the resulting family of Kripke structures~$\calS_k$
in~\cite{berkholz2017tight} are complete bipartite graphs $K_{k,k}$
with~$k^2$ transitions, one for each bitstring in~$\Bit^k$, which are
dense with respect to transitions. This is because the refinement
costs incurred for colour refinement are based on counting edges. The
family of LTSs~$\calC_k$ presented in
Section~\ref{sec:two-action-action-set} has as their main components
thin stakes that are $2^{k}$~states high, one for every
bitstring in~$\Bit^k$ because for partition refinement for
bisimulation the number of (moved) states is relevant.

Although the families of graphs for the lowerbounds of bisimulation
and graph colouring are definitely related, they are very different if
it comes to the approach, in particular regarding the measurement of
complexity. It is unclear how to transform the $\calS_k$-family into a
family of labelled transition systems such that partition refinement
takes $\Omega({(m{+}n) \log n})$ transfers of states to a newly created
block, i.e.\ in terms of the refinement cost~$\mathit{rc}$ of
Section~\ref{sec:prelims}. Still the Kripke structures~$\calS_k$
of~\cite{berkholz2017tight} can be interpreted as non-deterministic
labelled transition systems with a single action label and an initial
partition based on the assignment of atomic proposition. Similarly,
it is not obvious how to transform the $\calC_k$-family into a family
of undirected or directed graphs such that colour refinement requires
inspection of $\Omega((m{+}n) \log n)$ edges.

%!TEX root =  main.tex
%% file conclusion.tex

\section{Conclusion}
\label{sec:conclusion}

We have shown that, even when restricted to deterministic LTSs, it is
not possible to construct a sequential algorithm based on partition
refinement that is more efficient than $\Omega((m+n)\log n)$.
The bound obtained is preserved even when the algorithm is extended
with an oracle that can determine for specific states in constant time 
whether they are bisimilar or not. The oracle proof technique enabled
us to show that the algorithmic ideas underlying Roberts' algorithm
\cite{paige1985linear} for the one-letter alphabet case cannot be used
to come up with a fundamentally faster enhanced partition refinement
algorithm for bisimulation.

Of course, this is not addressing a generic lower bound
to decide bisimilarity on LTSs, nor proving the conjecture that the
Paige-Tarjan algorithm is optimal for deciding bisimilarity.
It is conceivable that a more efficient algorithm for bisimilarity 
exists that is not based on partitioning.
However, as it stands, no techniques are known to prove
such a generic algorithmic lowerbound,
and all techniques that do exist make assumptions on allowed
operations, such as the well-known lowerbound on sorting.

But by relaxing the notion of a valid partition sequence, and maybe
introducing alternatives for oracles, it may very well be possible
that the lower bound is extended to a wider range of algorithmic
techniques to determine bisimulation, making it very unlikely that
sequential algorithms for bisimulation with a time-complexity better
than $\bigO((m+n)\log n)$ exist. Note that the current lowerbound
already applies to all known efficient algorithms for bisimulation.

For the parallel setting, we showed that deciding bisimilarity by
partitioning is~$\Omega(n)$.  In this case a similar situation
occurs. For LTSs with one action label it is possible to calculate
bisimulation in logarithmic time, cf.~\cite{jaja1994efficient}. An
interesting, but as yet open question is whether the techniques used
\cite{jaja1994efficient} can fundamentally improve the efficiency of
determining bisimulation in parallel, or, as we believe, the
lowerbound result can be strengthened along the lines of the
sequential case to show that the techniques of
\cite{jaja1994efficient} are insufficient to obtain a sub-linear
parallel time complexity to determine bisimulation for
labelled transition systems with at least two action labels.

\bibliographystyle{alphaurl}
\bibliography{main} 
\end{document}